\documentclass[11pt]{article}
\usepackage{amssymb}
\usepackage{amsthm}
\usepackage{amsmath}
\usepackage[dvips]{epsfig}
\usepackage{graphicx}
\usepackage{color}
\usepackage{url}

\usepackage[utf8]{inputenc}
\usepackage[left=1cm, right=2cm, top=2cm, bottom=2cm]{geometry}
\usepackage[english]{babel}
\usepackage{amsfonts}
\usepackage[numbers]{natbib}
\bibliographystyle{acm}

\headheight=8.25pt \topmargin=0pt \textheight=620pt
\textwidth=450pt \oddsidemargin=10pt \evensidemargin=10pt

\newtheorem{theorem}{\bf Theorem}

\newtheorem{remark}{\bf Remark}
\newtheorem{proposition}{\bf Proposition}


\newcommand{\PT}{{\cal PT}}

\newcommand{\ltz}{{l^2}}
\newcommand{\ltzc}{{l^2_c(\mathbb{Z})}}
\newcommand{\ltzz}{{l^2(\mathbb{Z})}}

\title{\bf Long-time stability of breathers \\ in Hamiltonian $\PT$-symmetric lattices}

\author{Alexander Chernyavsky$^{1}$ and Dmitry E. Pelinovsky$^{1,2}$ \\
{\small $^{1}$ Department of Mathematics, McMaster University, Hamilton, Ontario, L8S 4K1, Canada} \\
{\small $^{2}$ Department of Applied Mathematics, Nizhny Novgorod State Technical University, Nizhny Novgorod, Russia }}

\begin{document}

\maketitle

\begin{abstract}
We consider the Hamiltonian version of a $\PT$-symmetric lattice that describes dynamics of coupled pendula
under a resonant periodic force. Using the asymptotic limit of a weak coupling between the pendula,
we prove the nonlinear long-time stability of breathers (time-periodic solutions localized in the lattice)
by using the Lyapunov method. Breathers are saddle points of the extended energy function,
which are located between the continuous bands of positive and negative energy. Nevertheless,
we construct an approximate Lyapunov function and estimate its evolution on a long but finite time interval.
The nonlinear stability analysis becomes possible for the $\PT$-symmetric lattice only because of
the existence of a Hamiltonian structure.
\end{abstract}

\section{Introduction}

We consider the following system of amplitude equations
\begin{eqnarray}
\left\{ \begin{array}{l} i \frac{du_n}{dt} = \epsilon \left( v_{n+1} - 2 v_n + v_{n-1} \right) + i \gamma u_n + \Omega v_n +
2\left[ \left( 2|u_n|^2 + |v_n|^2 \right) v_n + u_n^2 \bar{v}_n \right], \\
i \frac{dv_n}{dt} = \epsilon \left( u_{n+1} - 2 u_n + u_{n-1} \right) - i \gamma v_n + \Omega u_n +
2\left[ \left( |u_n|^2 + 2 |v_n|^2 \right) u_n + \bar{u}_n v_n^2 \right],
\end{array} \right.
\label{PT-dNLS}
\end{eqnarray}
where $\{ u_n, v_n \}_{n \in \mathbb{Z}}$ are complex-valued amplitudes that depend on time $t \in \mathbb{R}$,
whereas $(\Omega,\gamma,\epsilon)$ are real-valued parameters arising in a physical context described below.
We assume $\Omega \neq 0$, $\gamma > 0$, and $\epsilon > 0$ throughout our work.

The system (\ref{PT-dNLS}) describes a one-dimensional chain of coupled pendula, which are connected by
torsional springs with the tension coefficient $\epsilon$
in the longitudinal direction $n \in \mathbb{Z}$. Each pair of coupled pendula
$(u_n,v_n)$ are hung on a common string with a periodically varying tension coefficient
propositional to $\gamma$. When the frequency of the periodic force is in $1:2$ resonance with the frequency of pendula
detuned by $\Omega$, the system of Newton's equations of motion has been shown in \cite{ChernPel} to reduce asymptotically
to the system of amplitude equations (\ref{PT-dNLS}). Similar systems of amplitude equations
were derived previously in a number of physically relevant applications \cite{barashenkov2014,BBPS,BSRW}.
Figure \ref{netpic} depicts schematically the chain of coupled pendula.

\begin{figure}[h!]
\centering
\includegraphics[width=0.4\textwidth]{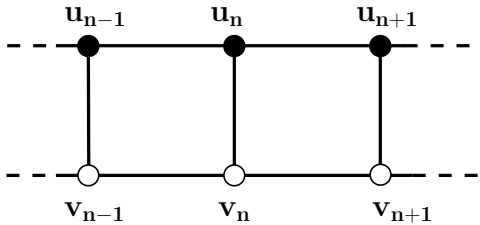}
\caption{The chain of $\PT$-symmetric dimers representing coupled pendula.
Filled (empty) circles correspond to sites with gain (loss).}
\label{netpic}
\end{figure}

The system of amplitude equations (\ref{PT-dNLS}) is usually referred to as the $\PT$-symmetric
discrete nonlinear Schr\"{o}dinger (dNLS) equation
because solutions to system (\ref{PT-dNLS}) remain invariant with respect to the
simultaneous action of the parity ($\mathcal{P}$) and time-reversal ($\mathcal{T}$) operators
given by
\begin{equation}
\label{PT-symmetry}
{\cal P} \left[ \begin{array}{c} u \\ v \end{array} \right] =
\left[ \begin{array}{c} v \\ u \end{array} \right], \qquad
{\cal T} \left[ \begin{array}{c} u(t) \\ v(t) \end{array} \right] =
\left[ \begin{array}{c} \bar{u}(-t) \\ \bar{v}(-t) \end{array} \right].
\end{equation}
Although the system of amplitude equations (\ref{PT-dNLS}) is autonomous, it is not norm-preserving and
the parameter $\gamma$ of the periodic tension coefficient represents the gain--loss coefficient. Indeed, if $\gamma > 0$,
the $\gamma$-term of the first equation induces the exponential growth of the amplitude $u_n$,
whereas the $\gamma$-term of the second equation induces the exponential decay of the amplitude $v_n$.
In spite of the gain-loss terms, many $\PT$-symmetric systems share properties of the Hamiltonian systems
and admit linearly stable zero equilibrium at least for sufficiently small values of $\gamma$ \cite{Bender}.

The remarkable property of the $\PT$-symmetric dNLS equation (\ref{PT-dNLS}) is the existence of the
cross--gradient symplectic structure \cite{barashenkov,barashenkov2014} with two conserved quantities
bearing the meaning of the energy and charge functions. Indeed, the $\PT$-symmetric dNLS
equation (\ref{PT-dNLS}) can be cast in the Hamiltonian form via the cross-gradient symplectic structure
\begin{equation}
i \frac{d u_n}{dt} = \frac{\partial H}{\partial \bar{v}_n}, \quad
i \frac{d v_n}{dt} = \frac{\partial H}{\partial \bar{u}_n}, \quad n \in \mathbb{Z},
\label{Hamiltonian-PT-dNLS}
\end{equation}
where the energy function is
\begin{eqnarray}
\nonumber
H & = & \sum_{n \in \mathbb{Z}} (|u_n|^2 + |v_n|^2)^2 + (u_n \bar{v}_n + \bar{u}_n v_n)^2 + \Omega (|u_n|^2 + |v_n|^2) \\
& \phantom{t} & - \epsilon |u_{n+1}-u_n|^2 - \epsilon |v_{n+1}-v_n|^2 + i \gamma ( u_n \bar{v}_n - \bar{u}_n v_n).
\label{energy-PT-dNLS}
\end{eqnarray}
The Hamiltonian system (\ref{Hamiltonian-PT-dNLS}) has an additional gauge symmetry,
with respect to the transformation $\{ u_n, v_n \}_{n \in \mathbb{Z}} \to \{ e^{i \alpha} u_n, e^{i \alpha} v_n \}_{n \in \mathbb{Z}}$,
where $\alpha \in \mathbb{R}$. The charge function related to the gauge symmetry is written in the form
\begin{equation}
\label{charge-PT-dNLS}
Q = \sum_{n \in \mathbb{Z}} (u_n \bar{v}_n + \bar{u}_n v_n).
\end{equation}
The energy and charge functions $H$ and $Q$ are conserved in the time evolution of the Hamiltonian system (\ref{Hamiltonian-PT-dNLS}).
Compared to the other physically relevant $\PT$-symmetric dNLS equations \cite{pel1,pel3,pel4},
where the Hamiltonian structure is not available and analysis of nonlinear stability of the zero
equilibrium and time-periodic localized breathers is barely possible, we are able to address
these questions for the $\PT$-symmetric dNLS equation (\ref{PT-dNLS}), thanks to
the Hamiltonian structure (\ref{Hamiltonian-PT-dNLS}) with two conserved quantities (\ref{energy-PT-dNLS}) and (\ref{charge-PT-dNLS}).

The temporal evolution of the $\PT$-symmetric dNLS equation (\ref{PT-dNLS}) is studied in
sequence space $\ell^2(\mathbb{Z})$ for sequences $(u,v)$ as functions of time. Global existence
of solutions in $\ell^2(\mathbb{Z})$ follows from an easy application of Picard's method and
energy estimates (Proposition \ref{proposition-existence}). The global solution in $\ell^2(\mathbb{Z})$
may still grow at most exponentially in time, due to the destabilizing properties of the gain-damping terms
in the system (\ref{PT-dNLS}). However, thanks to coercivity of the energy function (\ref{energy-PT-dNLS}) near the zero
equilibrium, we can still obtain a global bound on the $\ell^2(\mathbb{Z})$ norm of the solution
near the zero equilibrium, provided it is linearly stable. Moreover, for $\Omega > (\gamma + 4 \epsilon)$,
the global bound holds for arbitrary initial data.
The corresponding result is given by the following theorem (proved in Section 2).

\begin{theorem}
For every $\Omega > (\gamma + 4 \epsilon)$ and every initial data $(u(0),v(0)) \in \ell^2(\mathbb{Z})$,
there is a positive constant $C$ that depends on parameters and $\Omega, \gamma, \epsilon$ and
$(\| u(0) \|_{\ell^2},\| v(0) \|_{\ell^2})$ such that
\begin{equation}
\label{time-ind-bound}
\| u(t) \|_{\ell^2}^2 + \| v(t) \|_{\ell^2}^2 \leq C, \quad \mbox{\rm for every} \;\; t \in \mathbb{R}.
\end{equation}
Furthermore, the bound (\ref{time-ind-bound}) holds for every $\Omega < -\gamma$ and
every $(u(0),v(0)) \in \ell^2(\mathbb{Z})$ with sufficiently small $\ell^2(\mathbb{Z})$ norm.
\label{theorem-bound}
\end{theorem}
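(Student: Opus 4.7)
\medskip\noindent\textbf{Plan and pointwise analysis.}
The strategy rests on conservation of $H$ (with global $\ell^2$ existence from Proposition \ref{proposition-existence} providing continuity of $t\mapsto \|u(t)\|_{\ell^2}^2+\|v(t)\|_{\ell^2}^2$). The idea is to show that, in each of the two regimes, either $H$ or $-H$ is coercive in the $\ell^2$ norm. The key preliminary is a pointwise diagonalisation of the quadratic piece
\[
\Omega(|u_n|^2+|v_n|^2) + i\gamma(u_n\bar v_n - \bar u_n v_n).
\]
Writing $u_n=u_{n,1}+iu_{n,2}$ and $v_n=v_{n,1}+iv_{n,2}$, its symmetric matrix representation decouples on the pairs $(u_{n,1},v_{n,2})$ and $(u_{n,2},v_{n,1})$ into two $2\times 2$ blocks with diagonal $\Omega$ and off-diagonal $\pm\gamma$. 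The eigenvalues are $\Omega\pm\gamma$ (each double), so the form is sandwiched pointwise between $(\Omega-\gamma)(|u_n|^2+|v_n|^2)$ and $(\Omega+\gamma)(|u_n|^2+|v_n|^2)$.

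\medskip\noindent\textbf{Regime $\Omega>\gamma+4\epsilon$.}
In $H$ the quartic terms $(|u_n|^2+|v_n|^2)^2$ and $(u_n\bar v_n+\bar u_n v_n)^2$ are nonnegative, the pointwise quadratic piece is $\geq (\Omega-\gamma)(|u_n|^2+|v_n|^2)$ by the above, and the elementary estimate $\sum_n|w_{n+1}-w_n|^2\leq 4\|w\|_{\ell^2}^2$ applied to $w=u,v$ handles the $-\epsilon$ coupling. Putting these together I will obtain
\[
H(u,v)\;\geq\;(\Omega-\gamma-4\epsilon)\bigl(\|u\|_{\ell^2}^2+\|v\|_{\ell^2}^2\bigr),
\]
after which conservation of $H$, combined with the obvious polynomial upper bound on $H(u(0),v(0))$ in terms of $\|u(0)\|_{\ell^2}^2+\|v(0)\|_{\ell^2}^2$, yields the uniform bound with an explicit constant $C$ valid for arbitrary initial data.

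\medskip\noindent\textbf{Regime $\Omega<-\gamma$ and the main obstacle.}
Now I will work with $-H$. The pointwise analysis gives $-\Omega(|u|^2+|v|^2)-i\gamma(u\bar v-\bar u v)\geq (-\Omega-\gamma)(|u|^2+|v|^2)$, the coupling terms in $-H$ carry the favourable sign, and the quartic part, now negative, is bounded from below by $-2(\|u\|_{\ell^2}^2+\|v\|_{\ell^2}^2)^2$ using the embedding $\ell^2\hookrightarrow\ell^4$. Writing $f(t):=\|u(t)\|_{\ell^2}^2+\|v(t)\|_{\ell^2}^2$, conservation of $H$ yields
\[
(-\Omega-\gamma)\,f(t)-2f(t)^2\;\leq\;-H(u(0),v(0)).
\]
For $f(0)$ chosen small enough that $|H(u(0),v(0))|<(-\Omega-\gamma)^2/8$, a continuity bootstrap on the continuous function $f$ will keep it in the bounded small branch $\{0\leq f\leq (-\Omega-\gamma)/4\}$ for all $t\in\mathbb{R}$. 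The principal subtlety is precisely this bootstrap closure: one must arrange the smallness threshold on $\|u(0)\|_{\ell^2}^2+\|v(0)\|_{\ell^2}^2$ to be compatible with the polynomial upper bound on $|H(u(0),v(0))|$, but no serious analytic difficulty is anticipated. I also note in passing that the charge $Q$ is not needed here: adding $-\lambda Q$ to $H$ would produce a coercivity constant $\Omega-\sqrt{\gamma^2+\lambda^2}$, already optimised at $\lambda=0$.
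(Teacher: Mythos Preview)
Your proof is correct and follows essentially the same approach as the paper: in both regimes you establish coercivity of $\pm H$ via the same eigenvalue bound $\Omega\pm\gamma$ on the quadratic part, the same $4\epsilon$ estimate on the discrete gradient terms, and the $\ell^2\hookrightarrow\ell^4$ embedding for the quartic part, then invoke conservation of $H$. The only differences are cosmetic---you spell out the $2\times 2$ diagonalisation and the continuity bootstrap in more detail than the paper, and your quartic constant is $2$ rather than $1$---but the argument is the same.
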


\begin{remark}
As shown in \cite{ChernPel}, the zero equilibrium of the $\PT$-symmetric dNLS equation (\ref{PT-dNLS})
is linearly stable if $|\gamma| < \gamma_0$, where the $\PT$ phase transition threshold $\gamma_0$ is
given by
\begin{equation}
\label{gamma-0}
\gamma_0 := \left\{ \begin{array}{ll} \Omega - 4 \epsilon, & \Omega > 0 \\
|\Omega|, & \Omega < 0. \end{array} \right.
\end{equation}
The zero equilibrium is linearly unstable if $|\gamma| \geq \gamma_0$.
Thus, the constraints on parameters in Theorem \ref{theorem-bound}
coincide with the criterion of linear stability of the zero equilibrium.
\end{remark}

We shall now characterize breathers supported by the $\PT$-symmetric dNLS equation (\ref{PT-dNLS}).
These are solutions of the form
\begin{equation}
\label{stationary}
u(t) = U e^{-i E t}, \quad v(t) = V e^{-i E t},
\end{equation}
where the frequency  parameter $E$ is considered to be real and
the sequence $(U,V) \in \ell^2(\mathbb{Z})$ is time-independent. By continuous embedding,
we note that $(U,V) \in \ell^2(\mathbb{Z})$ implies the decay at infinity: $|U_n| + |V_n| \to 0$
as $|n| \to \infty$. The breather is considered to be $\PT$-symmetric
with respect to the operators in (\ref{PT-symmetry}) if $V = \bar{U}$.

Thanks to the cross-gradient symplectic structure (\ref{Hamiltonian-PT-dNLS}),
the breather solution (\ref{stationary}) is a critical point of
the extended energy function $H_E : \ell^2(\mathbb{Z}) \to \mathbb{R}$ given by
\begin{equation}
\label{combined-energy-functional}
H_E := H - E Q,
\end{equation}
where $H$ and $Q$ are given by (\ref{energy-PT-dNLS}) and (\ref{charge-PT-dNLS}).
The Euler--Lagrange equations for $H_E$ produce
the stationary $\PT$-symmetric dNLS equation:
\begin{equation}
E U_n = \epsilon \left( \bar{U}_{n+1} - 2 \bar{U}_n + \bar{U}_{n-1} \right) + i \gamma U_n + \Omega \bar{U}_n  +
6 |U_n|^2 \bar{U}_n + 2 U_n^3, \label{eq:statPT}
\end{equation}
which corresponds to the reduction of the $\PT$-symmetric dNLS equation (\ref{PT-dNLS})
for the breather solution (\ref{stationary}) under the $\PT$ symmetry $V = \bar{U}$.

Existence and spectral stability of breathers can be characterized
in the limit of small coupling constant $\epsilon$, when breathers
bifurcate from solutions of the dimer equation arising at a single site, say the central site at $n = 0$.
This technique was introduced for the $\PT$-symmetric systems in \cite{pel2,pel3}
and was applied to the system of amplitude equations (\ref{PT-dNLS}) in \cite{ChernPel}.
Here we recall the main facts about these breathers obtained in \cite{ChernPel}.

Figure \ref{branches} represents branches of the time-periodic solutions of the central dimer at $\epsilon = 0$,
where the amplitude of the central dimer $A = |U_0| = |V_0|$ is plotted versus the frequency parameter $E$.
The left panel corresponds to the solution with $\Omega > \gamma > 0$, whereas
the right panel corresponds to the solution with $\Omega < - \gamma < 0$.
The constraint $|\gamma| < |\Omega|$ is used for stability of the zero equilibrium at $\epsilon = 0$
outside the central dimer, according to Theorem \ref{theorem-bound}.
The values $\pm E_0$ with $E_0 := \sqrt{\Omega^2-\gamma^2}$ correspond to bifurcation of the small-amplitude
solutions. The small-amplitude solutions are connected with the large-amplitude solutions for $\Omega > \gamma > 0$, whereas
the branches of small-amplitude and large-amplitude solutions are disconnected for $\Omega < -\gamma < 0$.

\begin{figure}[!htbp]
\center
\includegraphics[scale=0.6]{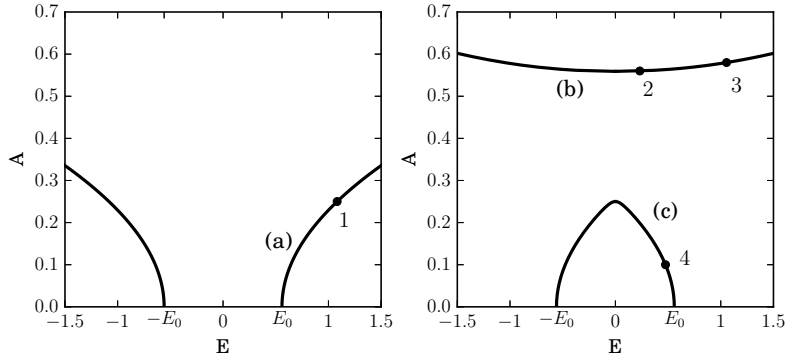}
\caption{Time-periodic solutions of the $\PT$-symmetric dimer for $\gamma = \frac{1}{2}$ and (a) $\Omega = \frac{3}{4} > \gamma$ or
(b) $\Omega = -\frac{3}{4} < -\gamma$.}
\label{branches}
\end{figure}

Every time-periodic solution supported at the central dimer for $\epsilon = 0$
is continued uniquely and smoothly with respect to the small coupling parameter $\epsilon$
by the implicit function arguments \cite{ChernPel}. The resulting breather is symmetric
about the central site and $\PT$-symmetric so that
\begin{equation}
\label{symmetric}
V_n = \bar{U}_n = \bar{U}_{-n} = V_{-n}, \quad n \in \mathbb{Z}.
\end{equation}
Moreover, the breather profile decays fast at infinity (Proposition \ref{proposition-soliton}).

Since $(U,V)$ are critical points of the extended energy function (\ref{combined-energy-functional}),
the nonlinear stability of breathers can be studied by the Lyapunov method
if the second variation of $H_E$ is sign-definite in $\ell^2(\mathbb{Z})$.
The second variation of $H_E$ is given by a quadratic form associated with
the self-adjoint (Hessian) operator $\mathcal{H}''_E : \ell^2(\mathbb{Z}) \to \ell^2(\mathbb{Z})$ written in the form
\begin{eqnarray}
\label{PT-dNLS-var}
\mathcal{H}''_E = \mathcal{M} + \epsilon \mathcal{L},
\end{eqnarray}
where blocks of $\mathcal{M}$ at each lattice site $n \in \mathbb{Z}$ are given by
{\small \begin{eqnarray*}
\arraycolsep=0pt\def\arraystretch{1}
\mathcal{M}_n = \left[\begin{array}{cccc} \Omega + 8 |U_n|^2 & 2(U_n^2 + \bar{U}_n^2) & -E - i \gamma + 4(U_n^2 + \bar{U}_n^2) & 4 |U_n|^2 \\
2(U_n^2 + \bar{U}_n^2) & \Omega + 8 |U_n|^2 & 4 |U_n|^2  & -E + i \gamma + 4(U_n^2 + \bar{U}_n^2) \\
-E + i \gamma + 4(U_n^2 + \bar{U}_n^2) & 4 |U_n|^2 & \Omega + 8 |U_n|^2 & 2(U_n^2 + \bar{U}_n^2)\\
4 |U_n|^2 & -E - i \gamma + 4(U_n^2 + \bar{U}_n^2) & 2(U_n^2 + \bar{U}_n^2) & \Omega + 8 |U_n|^2 \end{array} \right],
\end{eqnarray*}}whereas $\mathcal{L}$ is the discrete Laplacian operator applied to blocks of $\phi$
at each lattice site $n \in \mathbb{Z}$:
$$
(\mathcal{L} \phi)_n = \phi_{n+1} - 2 \phi_n + \phi_{n-1}.
$$

For the two solution branches with $\Omega < -\gamma < 0$ and $|E| < E_0$
(points $2$ and $4$ on Figure \ref{branches}), it was shown in \cite{ChernPel}
that the infinite-dimensional part of the spectrum of $\mathcal{H}''_E$ in $\ell^2(\mathbb{Z})$
is negative definite and the rest of the spectrum includes a simple zero eigenvalue due to gauge symmetry
and either three (in case of point $2$) or one (in case of point $4$) positive eigenvalues. As a result,
the nonlinear orbital stability of the corresponding breathers was developed in \cite{ChernPel}
by using the standard energy methods \cite{Cuccagna,Kapitula}.

On the other hand, for the solution branches with $|E| > E_0$ (points $1$ and $3$ on Figure \ref{branches}),
it was shown in \cite{ChernPel} that the spectrum of $\mathcal{H}''_E$ in $\ell^2(\mathbb{Z})$
includes infinite-dimensional positive and negative parts. Therefore, for $|E| > E_0$ both
for $\Omega > \gamma > 0$ and $\Omega < -\gamma < 0$,
$(U,V)$ is an infinite-dimensional saddle point of the extended energy function $H_E$. This is very similar to the Hamiltonian
systems of the Dirac type, where the zero equilibrium and standing waves are located in the gap
between the positive and negative continuous spectrum.

Spectral stability of the solution branches with $\Omega > \gamma > 0$ and $|E| > E_0$ is proved for
sufficiently small $\epsilon$ under the non-resonance condition, which is checked numerically \cite{ChernPel}.
On the other hand, the solution branch with $\Omega < -\gamma < 0$ and $|E| > E_0$
is spectrally stable for sufficiently small $\epsilon$ almost everywhere except for
the narrow interval in the parameter space, where the non-resonance condition is not satisfied \cite{ChernPel}.
Since in both cases, $(U,V)$ is an infinite-dimensional saddle point
of the extended energy function $H_E$, the standard energy methods \cite{Kapitula} can not be applied
to the proof of nonlinear stability of the solution branches with $|E| > E_0$.

The main contribution of this paper is a proof of long-time nonlinear stability of the
infinite-dimensional saddle point $(U,V)$ by using the asymptotic limit
of small coupling parameter $\epsilon$. The novel method
which we develop here works for the solution branches with $\Omega > \gamma > 0$
and $|E| > E_0$ (shown on the left panel of Figure \ref{branches}). To remedy the difficulty with the
energy method, we select the energy function in the form
\begin{equation}
\Lambda_E :=  H - E (u_0 \bar{v}_0 + \bar{u}_0 v_0).
\label{lyapunov-function}
\end{equation}
Note that $\Lambda_E$ is different from the extended energy function $H_E$ in (\ref{combined-energy-functional}),
since $\Lambda_E$ only includes the part of $Q$ at the central site $n = 0$, where
$(U,V)$ is supported if $\epsilon = 0$.
With the definition of $\Lambda_E$ given by (\ref{lyapunov-function}),
we obtain a function with a positive second variation at $(U,V)$, however, two new obstacles arise now:
\begin{itemize}
\item[(i)] the first variation of $\Lambda_E$ does not vanish at $(U,V)$ if $\epsilon \neq 0$;
\item[(ii)] the value of $\Lambda_E$ is no longer constant in the time evolution
of the dNLS equation (\ref{PT-dNLS}).
\end{itemize}
The first difficulty is overcome with a local transformation of dependent variables.
However, due to the second difficulty, instead of the nonlinear stability for all times,
as in Lyapunov's stability theorem (Appendix A), we only establish
a long-time nonlinear stability of the breather on a long but finite time interval.
This long-time stability is usually referred as {\em metastability}.

We denote a solution of the $\PT$-symmetric dNLS equation (\ref{PT-dNLS}) in $\ell^2(\mathbb{Z})$ by
$\psi = (u,v)$ and the localized solution of the stationary dNLS equation (\ref{eq:statPT}) by
$\Phi = (U,V)$. We fix parameters $\gamma > 0$, $\Omega > \gamma$, and $E \in (-\infty,-E_0) \cup (E_0,\infty)$.
The following theorem  (proved in Section 3) formulates the main result of this paper.

\begin{theorem}
For every $\nu > 0$ sufficiently small, there exists $\epsilon_0 > 0$ and $\delta > 0$ such that
for every $\epsilon \in (0,\epsilon_0)$ the following is true. If $\psi(0) \in \ell^2(\mathbb{Z})$
satisfies $\| \psi(0) - \Phi \|_\ltz \le \delta$, then there exist a positive time $t_0 \lesssim \epsilon^{-1/2}$
and a $C^1$ function $\alpha(t) : [0,t_0] \to \mathbb{R} / (2\pi \mathbb{Z})$ such that
the unique solution $\psi(t) : [0,t_0] \to \ell^2(\mathbb{Z})$ to the $\PT$-symmetric dNLS equation
(\ref{PT-dNLS}) satisfies the bound
\begin{equation}
\label{bound-final}
\| e^{i\alpha(t)} \psi(t)  - \Phi \|_\ltz \le \nu, \quad \mbox{\rm for every} \;\; t \in [0,t_0].
\end{equation}
Moreover, there exists a positive constant $C$ such that $|\dot{\alpha} - E| \leq C \nu$, for every $t \in [0,t_0]$.
\label{theorem-main}
\end{theorem}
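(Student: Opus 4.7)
The plan is to run a modulated Lyapunov argument based on the modified energy $\Lambda_E$ from (\ref{lyapunov-function}). Write $\psi(t) = e^{-i\alpha(t)}\widehat\psi(t)$ with $\alpha(0)=0$ and set $\phi := \widehat\psi - \Phi$; since $H$ and $Q_0$ are both gauge invariant, $\Lambda_E$ can be evaluated equivalently on $\widehat\psi$, and the theorem reduces to controlling $\|\phi(t)\|_\ltz$ on the advertised time interval.

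Three preparatory ingredients overcome obstacles (i) and (ii). First, since $\nabla\Lambda_E(\Phi) = -E\,\nabla(Q - Q_0)(\Phi)$ is supported away from $n=0$ where $|U_n|+|V_n| = O(\epsilon^{|n|})$ by Proposition \ref{proposition-soliton}, one has $\|\nabla\Lambda_E(\Phi)\|_\ltz = O(\epsilon)$; an implicit function theorem on the gauge-orthogonal complement $\{i\Phi\}^\perp$ then produces a corrected reference $\tilde\Phi$ with $\|\tilde\Phi-\Phi\|_\ltz = O(\epsilon)$ and $\nabla\Lambda_E(\tilde\Phi)\in\mathrm{span}\{i\tilde\Phi\}$, so the Taylor expansion of $\Lambda_E$ about $\tilde\Phi$ restricted to $\{i\tilde\Phi\}^\perp$ has no linear term. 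Second, coercivity of the Hessian of $\Lambda_E$ on $\{i\Phi\}^\perp$ follows by decoupling sites: deleting the $-E$ entries from the block $\mathcal{M}_n$ in (\ref{PT-dNLS-var}) turns the $n\neq 0$ blocks at $\epsilon=0$ into matrices with eigenvalues $\Omega\pm\gamma>0$ (using $\Omega>\gamma$), uniformly bounded below by $\Omega-\gamma-O(\epsilon)$, while at $n=0$ the block coincides with the dimer Hessian of $H-EQ$ at $(U_0,V_0)$, whose positive-definiteness on $\{i\Phi_0\}^\perp$ for branches~1 and~3 is a finite-dimensional check inherited from \cite{ChernPel}. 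Third, the modulation $\alpha(t)$ is fixed by the orthogonality $\mathrm{Re}\langle\phi(t),i\tilde\Phi\rangle_\ltz = 0$; differentiating this constraint along the flow and using the equations of motion delivers a $C^1$ solution with the claimed bound $|\dot\alpha-E|\le C\|\phi\|_\ltz$.

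Equipped with these, the Taylor expansion of $\Lambda_E$ about $\tilde\Phi$ together with coercivity yields
\[
c\,\|\phi(t)\|_\ltz^2 \;\le\; \Lambda_E(\widehat\psi(t)) - \Lambda_E(\tilde\Phi) + C\,\|\phi(t)\|_\ltz^3.
\]
Because $\Lambda_E$ differs from the conserved $H_E$ only through the portion of $EQ$ away from site $0$, the time derivative of $\Lambda_E$ along (\ref{PT-dNLS}) collapses to the coupling flux across the central dimer,
\[
\tfrac{d}{dt}\Lambda_E(\widehat\psi) \;=\; -2\epsilon E\,\mathrm{Im}\!\left[v_1\bar v_0 + v_{-1}\bar v_0 + u_1\bar u_0 + u_{-1}\bar u_0\right],
\]
which vanishes at $\widehat\psi=\Phi$ by the $\PT$-symmetry (\ref{symmetric}), and therefore obeys $|\tfrac{d}{dt}\Lambda_E(\widehat\psi)| \le C\epsilon\|\phi\|_\ltz$ near the breather. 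Integrating, absorbing the $O(\epsilon)$ offset between $\Phi$ and $\tilde\Phi$, and running a standard continuity/bootstrap argument keep $\|\phi(t)\|_\ltz \le \nu$ on a window $[0,t_0]$ with $t_0 \lesssim \epsilon^{-1/2}$, establishing (\ref{bound-final}).

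The principal obstacle is the coercivity step: since $\mathcal{H}''_E$ itself is an infinite-dimensional saddle on branches~1 and~3, only the surgical replacement $Q\mapsto Q_0$---together with the gap $\Omega>\gamma$---converts the off-central continuous spectrum into a strictly positive band of width $2\gamma$ centered at $\Omega$. The price paid, namely the non-conservation of $\Lambda_E$ and the nonvanishing of its gradient at $\Phi$, is exactly what restricts the control from all time to the long-but-finite interval $[0,t_0]$.
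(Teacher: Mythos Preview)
Your proposal is correct and mirrors the paper's Section~3 architecture: your corrected reference $\tilde\Phi$ is exactly the paper's $\Phi+\rho$ from Proposition~\ref{proposition-near-identity}, your site-by-site coercivity argument is Propositions~\ref{proposition-critical-point}--\ref{proposition-coercivity}, and the modulation step matches Propositions~\ref{lemma-orbit}--\ref{lemma-modulation}. The one place you depart is in bounding the non-conservation: you observe that the coupling flux vanishes at the $\PT$-symmetric state $\Phi$, whence $|\tfrac{d}{dt}\Lambda_E|\le C\epsilon\|\phi\|$, whereas the paper does not exploit this cancellation and instead carries an $O(\epsilon^2)$ source term into a Young/Gronwall splitting (\ref{eq:derbnd})--(\ref{eq:lambd}) with auxiliary scale $\alpha=\epsilon^{1/2}$. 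In fact $\tilde\Phi$ inherits $\PT$-symmetry (by uniqueness in the implicit function theorem applied to the $\PT$-equivariant equation $\nabla\Lambda_E=0$), so the flux vanishes there as well; integrating $|\dot g|\le C\epsilon\sqrt{g}$ directly then yields the longer window $t_0\lesssim\epsilon^{-1}$, an improvement over the $\epsilon^{-1/2}$ that both you and the paper claim.
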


\begin{remark}
The statement of Theorem \ref{theorem-main} remains true for $\epsilon = 0$. In this (anti-continuum)
limit, Theorem \ref{theorem-main} gives nonlinear stability
of the standing localized state $\Phi$ compactly supported at the central site $n = 0$. The bound (\ref{bound-final})
is extended in the case $\epsilon = 0$ for all times $t \in \mathbb{R}$.
\end{remark}

\begin{remark}
It becomes clear from the proof of Theorem \ref{theorem-main} for $\epsilon \neq 0$, see inequality (\ref{estimate-parameters}) below,
that the bound (\ref{bound-final}) on the perturbation $\phi$ to the stationary solution $\Phi$ is defined within
the size of $\mathcal{O}(\epsilon^{1/2} + \delta)$. Therefore, if $\Phi_n = \mathcal{O}(\epsilon^{|n|})$ for every $n \neq 0$
(Proposition \ref{proposition-soliton}), then the perturbation term is $\phi_n = \mathcal{O}(\epsilon^{1/2} + \delta)$
for every $n \in \mathbb{Z}$. This is a limitation of the result of Theorem \ref{theorem-main}. Not only it
holds for long but finite times $t_0 = \mathcal{O}(\epsilon^{-1/2})$ but also it gives a larger than expected
bound on the perturbation term $\phi$.
\end{remark}

\begin{remark}
\label{remark-shorter-scale}
The statement of Theorem \ref{theorem-main} can be improved on a shorter time scale $t_0 = \mathcal{O}(1)$.
In this case, see  inequality (\ref{estimate-parameters-short}) below,
the perturbation term $\phi$ has the size of $\mathcal{O}(\epsilon + \delta)$.
Thus, the perturbation term $\phi_n$ at $n = \pm 1$
is comparable with the standing localized state $\Phi_n$ at $n = \pm 1$, but it is still much larger than
$\Phi_n$ for every $n$ such that $|n| \geq 2$.
\end{remark}

\begin{remark}
\label{remark-other-solutions}
Theorem \ref{theorem-main} cannot be extended to the solution branch with $\Omega < -\gamma < 0$ and $|E| > E_0$
(shown on the right panel of Figure \ref{branches})
because the second variation of $\Lambda_E$ at $(U,V)$ is not coercive and does not control
the size of perturbation terms. This analytical difficulty reflects the unfortunate location of
the discrete and continuous spectra that leads to a resonance studied in \cite{ChernPel}.
No resonance was found for the solution branch with $\Omega > \gamma > 0$ and $|E| > E_0$
(shown on the left panel of Figure \ref{branches})
and this numerical result from \cite{ChernPel}
is in agreement with the analytical method used in the proof of Theorem \ref{theorem-main}.
\end{remark}

The remainder of this paper is devoted to the proof of Theorems \ref{theorem-bound} and \ref{theorem-main}.

\section{Proof of Theorem \ref{theorem-bound}}

The following proposition gives the global existence result
for the $\PT$-symmetric dNLS equation (\ref{PT-dNLS}).

\begin{proposition}
\label{proposition-existence}
For every $(u^{(0)},v^{(0)}) \in \ell^2(\mathbb{Z})$, there exists a unique solution $(u,v)(t) \in
C^1(\mathbb{R},\ell^2(\mathbb{Z}))$ of the $\PT$-symmetric dNLS equation (\ref{PT-dNLS}) such that
$(u,v)(0) = (u^{(0)},v^{(0)})$. The unique solution depends continuously on initial data
$(u^{(0)},v^{(0)}) \in \ell^2(\mathbb{Z})$.
\end{proposition}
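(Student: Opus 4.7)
The plan is to prove Proposition \ref{proposition-existence} by the standard Picard-iteration plus a priori energy argument, tailored to the lattice setting and to the specific structure of the cross-gradient nonlinearity in (\ref{PT-dNLS}).

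First I would establish local existence. Write the system (\ref{PT-dNLS}) abstractly as $i \frac{d\psi}{dt} = \mathcal{A}\psi + \mathcal{N}(\psi)$ on the Banach space $X := \ell^2(\mathbb{Z}) \times \ell^2(\mathbb{Z})$, where $\mathcal{A}$ collects the discrete-Laplacian coupling together with the linear $\gamma$- and $\Omega$-terms (a bounded linear operator on $X$) and $\mathcal{N}$ collects all the cubic terms. Boundedness of $\mathcal{A}$ on $X$ is immediate. For $\mathcal{N}$, I would use the continuous embedding $\ell^2(\mathbb{Z}) \hookrightarrow \ell^\infty(\mathbb{Z})$ to bound, pointwise on the lattice, each cubic expression by a product of two $\ell^\infty$ factors and one $\ell^2$ factor; this shows $\mathcal{N}: X \to X$ is of class $C^\infty$ and Lipschitz on bounded subsets of $X$. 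Picard's contraction mapping theorem applied to the Duhamel form
\begin{equation*}
\psi(t) = e^{-i\mathcal{A} t}\psi(0) - i \int_0^t e^{-i\mathcal{A}(t-s)} \mathcal{N}(\psi(s))\, ds
\end{equation*}
then yields a unique maximal solution $\psi \in C^0((T_-,T_+),X)$ with $\psi(0)=(u^{(0)},v^{(0)})$, and the equation itself upgrades this to $C^1((T_-,T_+),X)$.

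Next I would derive an a priori bound ruling out finite-time blow-up. Multiplying the first equation of (\ref{PT-dNLS}) by $\bar u_n$, the second by $\bar v_n$, taking real parts and summing over $n \in \mathbb{Z}$, I expect the dispersive coupling to vanish after a discrete summation by parts, the $\Omega$-terms to cancel because $\bar u_n v_n + \bar v_n u_n$ is real, and the cubic terms to cancel identically (a short computation with $w_n:=\bar u_n v_n$ shows that the imaginary parts produced by $u_n$- and $v_n$-equations are equal and opposite). What remains is the gain--loss contribution, leading to
\begin{equation*}
\frac{d}{dt}\bigl(\|u(t)\|_{\ell^2}^2 + \|v(t)\|_{\ell^2}^2\bigr) = 2\gamma \bigl(\|u(t)\|_{\ell^2}^2 - \|v(t)\|_{\ell^2}^2\bigr),
\end{equation*}
hence $\|u(t)\|_{\ell^2}^2 + \|v(t)\|_{\ell^2}^2 \le e^{2\gamma |t|}(\|u^{(0)}\|_{\ell^2}^2 + \|v^{(0)}\|_{\ell^2}^2)$. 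Since the $X$-norm of the solution cannot blow up in finite time, the standard continuation criterion forces $T_\pm = \pm \infty$, giving global existence in $C^1(\mathbb{R},X)$.

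Finally, continuous dependence on initial data follows by subtracting the Duhamel formulas for two solutions $\psi_1,\psi_2$ with data in a bounded subset of $X$, using the local Lipschitz bound on $\mathcal{N}$ together with the a priori exponential bound just derived, and applying Gr\"onwall's inequality on any finite interval $[-T,T]$. The main (and really only) subtlety is verifying that the cubic self-interaction cancels when projected onto the $\ell^2$ norm, because without that cancellation one would only obtain a cubic differential inequality that does not, by itself, rule out finite-time blow-up; all remaining steps are routine applications of Picard iteration, summation by parts, and Gr\"onwall's lemma.
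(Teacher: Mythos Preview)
Your proposal is correct and follows essentially the same route as the paper: Picard iteration for local well-posedness (the paper phrases the cubic estimate via the Banach-algebra property of $\ell^2(\mathbb{Z})$ rather than the $\ell^2\hookrightarrow\ell^\infty$ embedding, but this is equivalent), then the balance law for $\|u\|_{\ell^2}^2+\|v\|_{\ell^2}^2$ combined with Gronwall to preclude finite-time blow-up. Your derivation of the balance equation, including the cancellation of the dispersive, $\Omega$-, and cubic contributions, is exactly what the paper uses (indeed, your sign and factor $2\gamma$ are the correct ones; the paper records the identity with a harmless sign/factor slip that does not affect the Gronwall bound).
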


\begin{proof}
Since discrete Laplacian is a bounded operator in $\ell^2(\mathbb{Z})$ and the sequence space
$\ell^2(\mathbb{Z})$ forms a Banach algebra with respect to pointwise multiplication,
the local well-posedness of the initial-value problem for the $\PT$-symmetric dNLS equation (\ref{PT-dNLS})
follows from the standard Picard's method.  The local solution $(u,v)(t)$ exists in $C^0([-t_0,t_0],\ell^2(\mathbb{Z}))$
for some finite $t_0 > 0$. Thanks again to the boundedness of the discrete Laplacian operator in $\ell^2(\mathbb{Z})$,
bootstrap arguments extend this solution in $C^1([-t_0,t_0],\ell^2(\mathbb{Z}))$.

The local solution is continued globally by using the energy method.
For any solution $(u,v)(t)$ in $C^1([-t_0,t_0],\ell^2(\mathbb{Z}))$, we obtain
the following balance equation from system (\ref{PT-dNLS}):
$$
\frac{d}{dt} \sum_{n \in \mathbb{Z}} (|u_n|^2 + |v_n|^2) = - \gamma \sum_{n \in \mathbb{Z}} (|u_n|^2 - |v_n|^2).
$$
Integrating this equation in time and applying Gronwall's inequality, we get
$$
\| u(t) \|_{l^2}^2 + \| v(t) \|_{l^2}^2 \le \left(\| u(0)\|_{l^2}^2 + \|v(0) \|_{l^2}^2\right)
e^{|\gamma t|}, \quad t \in [-t_0,t_0].
$$
Therefore $\| u(t)\|_{l^2}$ and $\| v(t)\|_{l^2}$ cannot blow up in a finite time,
so that the local solution $(u,v)(t) \in C^1([-t_0,t_0],\ell^2(\mathbb{Z}))$
is continued for every $t_0 > 0$.
\end{proof}

A critical question also addressed in \cite{pel1,pel3,pel4} for other $\PT$-symmetric dNLS equations
is whether the $\ell^2(\mathbb{Z})$ norms of the global solution of Proposition \ref{proposition-existence}
remain bounded as $t \to \infty$. In the context of the Hamiltonian $\PT$-symmetric dNLS equation (\ref{PT-dNLS}),
this question can be addressed by using the energy function given by (\ref{energy-PT-dNLS}).
In what follows, we use coercivity of the energy function and
prove the result stated in Theorem \ref{theorem-bound}. \\

{\em Proof of Theorem \ref{theorem-bound}.} We recall that $\gamma > 0$ and $\epsilon > 0$ everywhere in the proof.
If $\Omega > (\gamma + 4 \epsilon)$, the following lower bound is available for the energy function $H$ given by (\ref{energy-PT-dNLS})
using Cauchy--Schwarz inequality:
\begin{equation}
\label{coercivity-E}
H \ge (\Omega - \gamma - 4 \epsilon) \left(\| u \|_{\ell^2}^2 + \| v \|_{\ell^2}^2\right).
\end{equation}
Since $H$ is time-independent and bounded for any $(u,v)(t) \in C^1(\mathbb{R},\ell^2(\mathbb{Z}))$
due to the continuous embedding $\| u \|_{\ell^p} \leq \| u \|_{\ell^2}$ for any $p \geq 2$,
we obtain the time-independent bound (\ref{time-ind-bound}) for any $\Omega > (\gamma + 4 \epsilon)$.

If $\Omega < -\gamma$, the following lower bound is available for the energy function $-H$:
\begin{equation}
\label{coercivity-E-pos}
-H \ge (|\Omega| - \gamma) \left(\| u \|_{\ell^2}^2 + \| v \|_{\ell^2}^2\right) - \left(\| u \|_{\ell^2}^2 + \| v \|_{\ell^2}^2\right)^2,
\end{equation}
where the continuous embedding $\| u \|_{\ell^4} \leq \| u \|_{\ell^2}$ has been used.
If $\| u(0) \|_{\ell^2} + \| v(0) \|_{\ell^2}$ is sufficiently small, then $|H|$ is sufficiently small,
and the bound (\ref{time-ind-bound}) with sufficiently small $C$ holds for every $t \in \mathbb{R}$. $\Box$ \\

\begin{remark}
For every $\Omega < (\gamma + 4 \epsilon)$, the energy functions $H$ or $-H$
do not produce a useful lower bound, which would
result in a time-independent bound on the $\ell^2(\mathbb{Z})$ norm for the global solution
$(u,v)(t)$. This is because the continuous embedding $\| u \|_{\ell^4} \leq \| u \|_{\ell^2}$
is not sufficient to control $H$ or $-H$ from below.

If the lattice is truncated on a finitely many (say, $N$) sites, then
the bound $\| u \|_{\ell^2} \leq N^{1/4} \| u \|_{\ell^4}$ can be used to obtain from (\ref{energy-PT-dNLS}):
$$
H \ge \left(\| u \|_{\ell^4}^4 + \| v \|_{\ell^4}^4\right) - (\gamma + 4 \epsilon - \Omega) N^{1/2}
\left(\| u \|_{\ell^4}^2 + \| v \|_{\ell^4}^2\right).
$$
Thus, the time-independent bound on the $\ell^4(\mathbb{Z}_N)$ (and then $\ell^2(\mathbb{Z}_N)$)
norms for the global solution $(u,v)(t)$ restricted on $N$ sites of the lattice $\mathbb{Z}$ is available
for every $\Omega$. However, the control becomes impossible in the limit $N \to \infty$ if
$\Omega < (\gamma + 4 \epsilon)$.
\end{remark}

It is an interesting open question to investigate if the global dynamics of
the $\PT$-symmetric dNLS equation (\ref{PT-dNLS}) on the infinite lattice
is globally bounded in time for $\Omega \leq (\gamma + 4 \epsilon)$.
This open question would include the case $-\gamma \leq \Omega \leq (\gamma + 4 \epsilon)$,
when the zero equilibrium is linearly unstable, and the
case $\Omega < -\gamma$ with sufficiently large initial data $(u(0),v(0)) \in \ell^2(\mathbb{Z})$,
when the zero equilibrium  is linearly stable but the bound (\ref{coercivity-E-pos}) can no longer be closed.

\section{Proof of Theorem \ref{theorem-main}}

We divide the proof of Theorem \ref{theorem-main} into several subsections.

\subsection{Characterization of the localized solutions}

For $\epsilon = 0$, a solution to the stationary dNLS equation (\ref{eq:statPT})
is supported on the central site $n = 0$ and satisfies
\begin{equation}
\label{dimer-states}
(E - i \gamma) U_0 - \Omega \bar{U}_0 = 6 |U_0|^2 \bar{U}_0 + 2 U_0^3.
\end{equation}
The parameters $\gamma$ and $\Omega$ are considered to be fixed, and parameter $E$ is thought
to parameterize a continuous branch of solutions of the nonlinear
algebraic equation (\ref{dimer-states}). Substituting the decomposition $U_0 = A e^{i \theta}$
with $A > 0$ and $\theta \in [0,2\pi)$ into the algebraic equation (\ref{dimer-states}), we obtain
\begin{equation}
\label{parameterization-A}
\sin(2\theta) = \frac{\gamma}{4 A^2 + \Omega}, \quad \cos(2 \theta) = \frac{E}{8 A^2 + \Omega},
\end{equation}
from which the solution branches of $E$ versus $A$ are obtained in  \cite{ChernPel}
as shown on Figure \ref{branches}. The dependence of $E$ versus $A$ is given analytically by
\begin{equation}
\label{branch-E-A}
E^2 = (\Omega + 8 A^2)^2 \left[ 1 - \frac{\gamma^2}{(\Omega + 4 A^2)^2} \right].
\end{equation}
Persistence of the central dimer in the unbounded lattice
with respect to the coupling parameter  $\epsilon$ is given by  the following proposition.

\begin{proposition}
\label{proposition-soliton}
Fix $\gamma > 0$, $\Omega > \gamma$, and $E \neq \pm E_0$,
where $E_0 := \sqrt{\Omega^2 - \gamma^2} > 0$.
There exist $\epsilon_0 > 0$ sufficiently small and $C_0 > 0$ such that
for every $\epsilon \in (-\epsilon_0,\epsilon_0)$, there exists a unique
solution $U \in l^2(\mathbb{Z})$ to the difference equation (\ref{eq:statPT}) such that
\begin{equation}
\label{bound-soliton}
\left| U_0 - A e^{i \theta} \right| \leq C_0 |\epsilon|, \quad |U_n| \leq C_0 |\epsilon|^{|n|}, \quad n \neq 0,
\end{equation}
where $A$ and $\theta$ are defined in (\ref{parameterization-A}). Moreover, the solution $U$
is smooth in $\epsilon$.
\end{proposition}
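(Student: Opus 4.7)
\medskip

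\noindent\textbf{Proof plan.} The approach is to apply the implicit function theorem (IFT) in $\ell^2(\mathbb{Z})$ around the $\epsilon=0$ base solution $U^{(0)}$ defined by $U^{(0)}_0 = A e^{i\theta}$ and $U^{(0)}_n = 0$ for $n\neq 0$, and then to upgrade the resulting $\mathcal{O}(\epsilon)$ bound to the claimed site-by-site exponential-in-$\epsilon$ decay by bootstrapping the stationary equation. Rewriting (\ref{eq:statPT}) as $F(U,\epsilon)=0$ with $F:\ell^2(\mathbb{Z})\times\mathbb{R}\to\ell^2(\mathbb{Z})$, smoothness of $F$ follows because $\ell^2(\mathbb{Z})$ is a Banach algebra under pointwise multiplication and the discrete Laplacian is bounded on $\ell^2(\mathbb{Z})$; we have $F(U^{(0)},0)=0$ by construction and by (\ref{dimer-states}).

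\medskip

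\noindent\textbf{Invertibility of the linearization.} Since $\epsilon\mathcal{L}$ vanishes at $\epsilon=0$, the Fr\'{e}chet derivative $D_U F(U^{(0)},0)$ acts block-diagonally over sites $n\in\mathbb{Z}$ when $(U_n,\bar U_n)$ are treated as independent Wirtinger variables. For each $n\neq 0$ the block is the constant matrix
\[
M_\infty = \begin{pmatrix} E - i\gamma & -\Omega \\ -\Omega & E + i\gamma \end{pmatrix}, \qquad \det M_\infty = E^2 + \gamma^2 - \Omega^2 = E^2 - E_0^2,
\]
which is invertible by the hypothesis $E\neq\pm E_0$. At $n=0$ the block is the Jacobian $M_0$ of the dimer map (\ref{dimer-states}) at $A e^{i\theta}$; using the polar parametrization (\ref{parameterization-A})--(\ref{branch-E-A}), a direct computation shows that $\det M_0$ factors into an expression proportional to $(dE/dA^2)$ multiplied by a positive quantity, and this is nonzero along the smooth branch of Figure \ref{branches} away from the bifurcation points $E=\pm E_0$. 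Because the off-center blocks $M_\infty$ are identical for all $n\neq 0$ and uniformly invertible, while $M_0$ is also invertible, the block-diagonal operator $D_U F(U^{(0)},0)$ is boundedly invertible on $\ell^2(\mathbb{Z})$. Applying the IFT in a neighborhood of $(U^{(0)},0)$ yields $\epsilon_0>0$ and a unique smooth curve $\epsilon\mapsto U(\epsilon)\in\ell^2(\mathbb{Z})$ with $U(0)=U^{(0)}$ and $\|U(\epsilon)-U^{(0)}\|_{\ell^2}\lesssim|\epsilon|$. In particular, this already gives $|U_0 - A e^{i\theta}|\leq C|\epsilon|$ and $|U_n|\leq C|\epsilon|$ for every $n\neq 0$.

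\medskip

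\noindent\textbf{Bootstrap for site-by-site decay.} To sharpen the bound for $|n|\geq 2$, I would rearrange (\ref{eq:statPT}) at every $n\neq 0$ as
\[
(E - i\gamma)U_n - (\Omega - 2\epsilon)\bar U_n = \epsilon(\bar U_{n+1} + \bar U_{n-1}) + 6|U_n|^2 \bar U_n + 2 U_n^3.
\]
For $\epsilon$ small, the on-site matrix on the left has determinant $E^2+\gamma^2-(\Omega-2\epsilon)^2 = E^2-E_0^2+\mathcal{O}(\epsilon)\neq 0$, so it can be inverted to yield
\[
|U_n| \leq C\epsilon\bigl(|U_{n+1}| + |U_{n-1}|\bigr) + C|U_n|^3.
\]
The IFT bound $|U_n|\leq C|\epsilon|$ for $n\neq 0$ makes the cubic term negligible. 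An induction on $|n|$ then establishes $|U_n|\leq C_0 |\epsilon|^{|n|}$: the base case $|U_{\pm 1}|\leq C_0|\epsilon|$ follows from the inequality since $|U_0|=\mathcal{O}(1)$ appears on the right; the inductive step uses that $|U_{n-1}|$ (resp.\ $|U_{n+1}|$) is already controlled by $|\epsilon|^{|n|-1}$, which after multiplication by $\epsilon$ yields $|\epsilon|^{|n|}$. Smoothness of $U$ in $\epsilon$ is inherited directly from the IFT.

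\medskip

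\noindent\textbf{Main obstacle.} The most delicate point is verifying the invertibility of the center-site block $M_0$ along the entire branch described by (\ref{branch-E-A}); this must be checked using the explicit polynomial relation and the constraint $E\neq\pm E_0$ to rule out the turning points of the curve $E\mapsto A(E)$. Everything else is a routine application of the IFT and a standard exponential-in-$\epsilon$ bootstrap, facilitated by the fact that the coupling in (\ref{eq:statPT}) is purely nearest-neighbor and proportional to $\epsilon$.
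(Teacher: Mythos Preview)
Your overall strategy matches the paper's: apply the implicit function theorem around the anti-continuum state and then extract site-by-site decay. For the existence part you do a single IFT on all of $\ell^2(\mathbb{Z})$, while the paper splits this into two applications (first solving the off-center sites with $U_0$ as a parameter, then solving at $n=0$); these are equivalent, and your identification of the invertibility of $M_\infty$ and of the center block $M_0$ is exactly what the paper uses (the paper simply cites \cite{ChernPel} for the latter rather than relating $\det M_0$ to $dE/dA^2$).

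The decay argument is where you diverge from the paper and where there is a gap. Your induction on $|n|$ does not close as written: at step $n>0$ your inequality reads $|U_n|\le C\epsilon(|U_{n+1}|+|U_{n-1}|)+C|U_n|^3$, and the inductive hypothesis only controls the \emph{inner} neighbor $U_{n-1}$. For the \emph{outer} neighbor $U_{n+1}$ you have nothing better than the IFT bound $|U_{n+1}|\le C|\epsilon|$, so the right-hand side is at best $\mathcal{O}(\epsilon^2)$, which kills the induction already at $n=3$. The fix is to run a ``layered'' induction in which, at stage $k$, you prove $|U_m|\le D_k|\epsilon|^k$ simultaneously for all $|m|\ge k$ (so both neighbors of any $|m|\ge k+1$ are covered by the stage-$k$ hypothesis). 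This works, but the recursion $D_{k+1}\approx 2CD_k$ produces $D_k\sim (2C)^k$, i.e.\ you obtain only $|U_n|\lesssim (2C|\epsilon|)^{|n|}$ rather than the stated bound $|U_n|\le C_0|\epsilon|^{|n|}$ with a uniform $C_0$.

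The paper avoids this constant blow-up by a different route: it substitutes $U_n=\epsilon^{|n|}W_{|n|}$ and applies the IFT a third time to the resulting system for $W\in\ell^2(\mathbb{N})$. The point is that at $\epsilon=0$ the $W$-system collapses to a linear lower-triangular recursion with the same invertible block $M_\infty$ on the diagonal, so one gets $W$ bounded in $\ell^2(\mathbb{N})$ uniformly, hence $|U_n|\le \|W\|_{\ell^\infty}\,|\epsilon|^{|n|}$ with a single constant. Your bootstrap is morally the same idea, but the rescaling packages the geometric decay into the unknown and lets a single IFT deliver the uniform-in-$n$ constant that the iterated inequality cannot.
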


\begin{proof}
Persistence and smoothness of a solution $U \in l^2(\mathbb{Z})$ to the difference equation (\ref{eq:statPT})
in $\epsilon$ is proved with two applications of the implicit function theorem
(Appendix A). In the first application, we consider the following system of algebraic equations
\begin{equation}
E U_n = \epsilon \left( \bar{U}_{n+1} - 2 \bar{U}_n + \bar{U}_{n-1} \right) + i \gamma U_n + \Omega \bar{U}_n +
6 |U_n|^2 \bar{U}_n + 2 U_n^3, \quad \pm n \in \mathbb{N}, \label{fixed-point-1}
\end{equation}
where $U_0 \in \mathbb{C}$ is considered to be given, in addition to the given parameters $\gamma$, $\Omega$, and $E$.
Let $x = \{ U_n \}_{\pm n \in \mathbb{N}}$, $X = \ell^2(\mathbb{N})$, $y = \epsilon$, $Y = \mathbb{R}$,
and $Z = \ell^2(\mathbb{N})$ in the definition of system $F : X \times Y \to Z$.
Then we have $F(0,0) = 0$ and the Jacobian
operator $D_x F(0,0) : X \to Z$ is block-diagonal with identical blocks given by
\begin{equation}
\label{jacobian-DxF}
\left[ \begin{matrix} E - i \gamma & -\Omega \\ -\Omega & E + i \gamma \end{matrix} \right],
\end{equation}
with the eigenvalues $\lambda_{\pm} := E \pm E_0$, where $E_0 := \sqrt{\Omega^2 - \gamma^2}$.
By the assumption of the proposition,
$\lambda_{\pm} \neq 0$, so that the Jacobian operator $D_x F(0,0)$ is one-to-one and onto.
By the implicit function theorem, for every $\epsilon \neq 0$ sufficiently small, there exists a unique small
solution $U \in \ell^2(\mathbb{N})$ to the system (\ref{fixed-point-1}) such that
\begin{equation}
\| U \|_{l^2(\mathbb{N})} \leq C' |\epsilon| |U_0|,
\label{eq:U}
\end{equation}
where a positive constant $C'$ is independent from $\epsilon$.
By the symmetry of the two systems (\ref{fixed-point-1}) for $\pm n \in \mathbb{N}$
and uniqueness of solutions, we have $U_{-n} = U_n$ for every $n \in \mathbb{N}$.

In the second application of the implicit function theorem, we consider the
following algebraic equation
\begin{equation}
E U_0 = 2 \epsilon \left( \bar{U}_{1} - \bar{U}_0 \right) + i \gamma U_0 + \Omega \bar{U}_0 +
6 |U_0|^2 \bar{U}_0 + 2 U_0^3, \label{fixed-point-2}
\end{equation}
where $U_1 \in \mathbb{C}$ depends on $U_0$, $\gamma$, $\Omega$, and $E$,
according to the previous result. Let $x = U_0$, $X = \mathbb{C}$, $y = \epsilon$, $Y = \mathbb{R}$,
and $Z = \mathbb{C}$ in the definition of system $F : X \times Y \to Z$.
Then we have $F(A e^{i\theta},0) = 0$, where $A$ and $\theta$ are
defined by (\ref{parameterization-A}). The Jacobian
operator $D_x F(A e^{i \theta},0) : X \to Z$ is given by the matrix
\begin{eqnarray}
\label{matrix-Jacobian}
\left[ \begin{matrix} E - i \gamma - 6 U_0^2 - 6 \bar{U}_0^2 & -\Omega - 12 |U_0|^2 \\ -\Omega - 12 |U_0|^2 &
E + i \gamma - 6 U_0^2 - 6 \bar{U}_0^2 \end{matrix} \right] \biggr|_{U_0 = A e^{i \theta}}.
\end{eqnarray}
It is shown in \cite{ChernPel} that the matrix given by (\ref{matrix-Jacobian}) is invertible
for every $\Omega > \gamma > 0$ and $|E| > E_0$.
By the implicit function theorem, for every $\epsilon \neq 0$ sufficiently small,
there exists a unique solution $U_0 \in \mathbb{C}$ to the algebraic equation (\ref{fixed-point-2})
near $A e^{i \theta}$ such that
\begin{equation}
\left| U_0 - A e^{i \theta} \right| \leq C'' |\epsilon|,
\label{eq:U0}
\end{equation}
where a positive constant $C''$ is independent from $\epsilon$. Since both equations (\ref{fixed-point-1}) and (\ref{fixed-point-2}) are smooth in $\epsilon$,
the solution $U$ is smooth in $\epsilon$. Thus, persistence of a smooth solution $U \in \ell^2(\mathbb{Z})$
with the first bound in (\ref{bound-soliton}) is proved by the implicit function theorem and the bound (\ref{eq:U0}).

It remains to prove the second bound in (\ref{bound-soliton}), for which we employ the
implicit function theorem for the third time. Inspecting the difference equation~\eqref{eq:statPT} shows that if
$U_{\pm 1} = \mathcal{O}(|\epsilon|)$ according to the bound (\ref{eq:U}),
then $U_n$ can be expressed by using the scaling transformation
\begin{equation}
U_n = \epsilon^{|n|} W_{|n|}, \quad \pm n \in \mathbb{N},
\label{eq:solutionse-precise}
\end{equation}
where the sequence $W \in \ell^2(\mathbb{N})$ is found from the system
\begin{equation}
EW_n - i\gamma W_n - \Omega \bar{W}_n = \bar{W}_{n-1} + \epsilon^2 \bar{W}_{n+1}
 - 2\epsilon \bar{W}_n + 6\epsilon^{2 |n|} |W_n|^2 \bar{W}_n + 2\epsilon^{2|n|}W_n^3,
 \quad n \in \mathbb{N}, \label{eq:Wn}
\end{equation}
with $W_0 = U_0$ given by the previous result.
Let $x = \{ W_n \}_{n \in \mathbb{N}}$, $X = \ell^2(\mathbb{N})$,
$y=\epsilon$, $Y = \mathbb{R}$, and $Z = \ell^2(\mathbb{N})$ in the definition
of system $F : X \times Y \to Z$.
Then, we have $F(x_0,0) = 0$, where $x_0 = \{ W_n^{(0)} \}_{n \in \mathbb{N}}$ is a unique solution
of the recurrence equation
\begin{equation}
E W_n^{(0)} - i\gamma W_n^{(0)} - \Omega \bar{W}_n^{(0)} = \bar{W}_{n-1}^{(0)}, \quad n \in \mathbb{N},
\label{leading-order}
\end{equation}
starting with a given $W_0^{(0)} = U_0$. Indeed, each block of (\ref{leading-order}) is given by
the invertible matrix ~\eqref{jacobian-DxF} with eigenvalues $\lambda_{\pm} = E\pm E_0 \neq 0$, hence,
a unique solution for $W^{(0)} \in \ell^{\infty}(\mathbb{N})$ is found from the recurrence relation (\ref{leading-order}).
Moreover, since $D_x F(0,0) : X \to Z$ is one-to-one and onto (as a lower block-triangular matrix with invertible diagonal blocks),
the solution $W^{(0)}$ is actually in $X = \ell^2(\mathbb{N})$. By the implicit function theorem
(Appendix A), for every $\epsilon \neq 0$ sufficiently small,
there exists a unique solution $W \in \ell^2(\mathbb{N})$ to the system (\ref{eq:Wn}) such that
\begin{equation}
\label{bound-W-n}
\| W - W^{(0)} \|_{\ell^2(\mathbb{N})} < C'''|\epsilon|,
\end{equation}
where a positive constant $C'''$ is independent of $\epsilon$. Thus, the
second bound in (\ref{bound-soliton}) is proved from (\ref{eq:solutionse-precise})
and (\ref{bound-W-n}).
\end{proof}

\subsection{Decomposition of the solution}

Let $\psi = (u,\bar{u},v,\bar{v})$ denote a solution of the $\PT$-symmetric dNLS equation (\ref{PT-dNLS}) in $\ell^2(\mathbb{Z})$
given by Proposition \ref{proposition-existence}. Let $\Phi = (U,\bar{U},V,\bar{V})$ denote a localized solution
of the stationary dNLS equation (\ref{eq:statPT}) given by Proposition \ref{proposition-soliton}.
Let $\phi = \psi - \Phi = ({\bf u},\bar{\bf u},{\bf v},\bar{\bf v})$ denote a perturbation to $\Phi$.
Note that these are extended $4$-component variables at each lattice site (concatenated by the complex conjugate
functions) compared to the two-component variables used in the formulation of Theorem \ref{theorem-main}.
The extended variables are more suitable for dealing with the energy functions 
such as (\ref{combined-energy-functional}) or (\ref{lyapunov-function}). 

By using the energy function (\ref{lyapunov-function}), we introduce the energy difference function
\begin{equation}
\Delta := \Lambda_E(\Phi + \phi) - \Lambda_E (\Phi).
\label{eq:convex}
\end{equation}
Let us write the expansion for $\Delta$ explicitly:
\begin{equation}
\label{eq:rearr}
\Delta = N_1(\phi) + N_2(\phi) + N_3(\phi) + N_4(\phi),
\end{equation}
where the linear part is
\begin{equation}
\label{linear}
N_1(\phi) = E \sum_{n \in \mathbb{Z} \backslash \{0\}} \left( \bar{V}_n {\bf u}_n + V_n \bar{\bf u}_n + \bar{U}_n {\bf v}_n + U_n \bar{\bf v}_n \right),
\end{equation}
the quadratic part is
\begin{equation}
\label{quadratic}
N_2(\phi) = \frac{1}{2} \langle \mathcal{H}''_E \phi, \phi \rangle_\ltz
+ E \sum_{n \in \mathbb{Z} \backslash \{0\}} \left( \bar{\bf v}_n {\bf u}_n + {\bf v}_n \bar{\bf u}_n \right),
\end{equation}
whereas the cubic and quartic parts of $\Delta$ denoted by $N_3(\phi)$ and $N_4(\phi)$ are not important
for estimates, thanks to the bounds
\begin{equation}
\label{cubic-quartic}
| N_3(\phi) | \leq C_3 \|\phi\|^3_\ltz, \quad
| N_4(\phi) | \leq C_4 \|\phi\|^4_\ltz,
\end{equation}
where $C_3$, $C_4$ are positive constants and we have used continuous embedding
$\| u \|_{\ell^p} \leq \| u \|_{\ell^2}$ for any $p \geq 2$.

In the next three subsections, we show that the quadratic part $N_2(\phi)$ is positive,
the linear part $N_1(\phi)$ can be removed by a local transformation,
and the time evolution of $\Delta$ can be controlled on a long but finite time interval.

In what follows, all constants depend on parameters $\gamma > 0$, $\Omega > \gamma$, and
$E \in (-\infty,-E_0) \cup (E_0,\infty)$. The parameter $\epsilon > 0$ is sufficiently
small, and unless it is stated otherwise, the constants do not depend on the small parameter $\epsilon$.

\subsection{Positivity of the quadratic part of $\Delta$}

The quadratic part (\ref{quadratic}) can be analyzed by a parameter continuation from the case $\epsilon = 0$.
Compared to the self-adjoint (Hessian) operator $\mathcal{H}''_E : \ell^2(\mathbb{Z}) \to \ell^2(\mathbb{Z})$ given by
(\ref{PT-dNLS-var}), the Hessian operator for $N_2(\phi)$
denoted by $\Lambda''_E : \ell^2(\mathbb{Z}) \to \ell^2(\mathbb{Z})$ is given by
\begin{eqnarray}
\label{PT-dNLS-var-mod}
\Lambda''_E = \mathcal{\tilde{M}} + \epsilon \mathcal{L},
\end{eqnarray}
where the discrete Laplacian $\mathcal{L}$ is the same but the blocks of $\mathcal{\tilde{M}}$
at each site $n \in \mathbb{Z}$ are now given differently for $n = 0$ and $n \neq 0$.
For $n = 0$, $\mathcal{\tilde{M}}_0 = \mathcal{M}_0$, whereas for $n \neq 0$, we have
{\small \begin{eqnarray*}
\mathcal{\tilde{M}}_n = \left[\begin{array}{cccc} \Omega + 8 |U_n|^2 & 2(U_n^2 + \bar{U}_n^2) & - i \gamma + 4(U_n^2 + \bar{U}_n^2) & 4 |U_n|^2 \\
2(U_n^2 + \bar{U}_n^2) & \Omega + 8 |U_n|^2 & 4 |U_n|^2  & + i \gamma + 4(U_n^2 + \bar{U}_n^2) \\
+ i \gamma + 4(U_n^2 + \bar{U}_n^2) & 4 |U_n|^2 & \Omega + 8 |U_n|^2 & 2(U_n^2 + \bar{U}_n^2)\\
4 |U_n|^2 & - i \gamma + 4(U_n^2 + \bar{U}_n^2) & 2(U_n^2 + \bar{U}_n^2) & \Omega + 8 |U_n|^2 \end{array} \right],
\end{eqnarray*}}that is, parameter $E$ is removed from $\mathcal{M}_n$.

The following proposition characterizes eigenvalues of $\mathcal{\tilde{M}}$ at $\epsilon = 0$.

\begin{proposition}
\label{proposition-critical-point}
Fix $\epsilon = 0$, $\gamma > 0$, $\Omega > \gamma$, and $E \neq \pm E_0$,
where $E_0 := \sqrt{\Omega^2 - \gamma^2} > 0$. The matrix block of $\mathcal{\tilde{M}}_n$
has three positive and one zero eigenvalues for $n = 0$ and two double positive eigenvalues for every $n \neq 0$.
\end{proposition}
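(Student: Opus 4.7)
My approach splits into the two cases $n\neq 0$ and $n=0$, which have very different flavors.

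For $n\neq 0$, I would first invoke Proposition \ref{proposition-soliton} at $\epsilon=0$, which forces $U_n=0$ for every $n\neq 0$. Plugging $U_n=0$ into the matrix $\tilde{\mathcal M}_n$ kills every $U_n$-dependent entry, leaving only the diagonal $\Omega$'s and the off-diagonal $\pm i\gamma$'s. The result is a matrix in which the index pairs $(1,3)$ and $(2,4)$ are decoupled, so after the permutation $(1,2,3,4)\mapsto (1,3,2,4)$ the block becomes block-diagonal with the two $2\times 2$ Hermitian blocks
\[
\begin{pmatrix} \Omega & -i\gamma \\ i\gamma & \Omega \end{pmatrix},\qquad
\begin{pmatrix} \Omega & i\gamma \\ -i\gamma & \Omega \end{pmatrix},
\]
each with eigenvalues $\Omega\pm\gamma$. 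Since $\Omega>\gamma>0$, both values are positive, and because the two blocks share the same spectrum, each eigenvalue has multiplicity two in $\tilde{\mathcal M}_n$. This gives the claimed two double positive eigenvalues.

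For $n=0$, the block is the full matrix $\mathcal M_0$, evaluated at $U_0=Ae^{i\theta}$ with $A$ and $\theta$ determined by the parameterization (\ref{parameterization-A}). The zero eigenvalue is forced by symmetry: $\Lambda_E$ is invariant under the gauge action $\{u_n,v_n\}\mapsto\{e^{i\alpha}u_n,e^{i\alpha}v_n\}$, so the infinitesimal generator $(iU_0,-i\bar U_0,iV_0,-i\bar V_0)=(iU_0,-i\bar U_0,i\bar U_0,-iU_0)$ (using $V_0=\bar U_0$) lies in the kernel of $\mathcal M_0$. A direct multiplication against the rows of $\mathcal M_0$, using $E U_0 = i\gamma U_0+\Omega\bar U_0+6|U_0|^2\bar U_0+2U_0^3$ from (\ref{dimer-states}), confirms this.

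To locate the remaining three eigenvalues I would block-reduce $\mathcal M_0$ by using the discrete symmetries that commute with it. The involutions induced by $\mathcal P$ (swap the $u$- and $v$-sectors) and by complex conjugation of the perturbation split $\mathbb C^4$ into two $2\times 2$ invariant subspaces, on each of which $\mathcal M_0$ acts as a real symmetric $2\times 2$ matrix whose entries are explicit polynomials in $A^2$, $\cos 2\theta$, $\sin 2\theta$ and $E$. I would then substitute $\cos 2\theta=E/(\Omega+8A^2)$, $\sin 2\theta=\gamma/(\Omega+4A^2)$ from (\ref{parameterization-A}), and the identity (\ref{branch-E-A}) for $E^2$, to express the traces and determinants of these two $2\times 2$ blocks in closed form. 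The zero eigenvalue will appear as the zero of one determinant (reflecting gauge invariance), and in the same block I would read off its trace as one strictly positive eigenvalue; in the other block both trace and determinant should turn out strictly positive under the assumptions $\Omega>\gamma>0$ and $|E|>E_0$.

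The main obstacle is the last algebraic step: showing that after the substitutions the two reduced $2\times 2$ determinants (one zero, one positive) and the two traces (both positive) come out with the correct signs uniformly on each of the branches with $|E|>E_0$. This will rely essentially on the branch equation (\ref{branch-E-A}), which gives $|E|>|\Omega+8A^2|\cdot\sqrt{1-\gamma^2/(\Omega+4A^2)^2}$ precisely outside the bifurcation points $\pm E_0$, and on the sign of $\Omega+4A^2>\gamma>0$ coming from $\Omega>\gamma$; the remainder is a bookkeeping exercise.
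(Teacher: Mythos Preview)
Your treatment of the case $n\neq 0$ coincides with the paper's: both reduce $\tilde{\mathcal M}_n$ (with $U_n=0$) to two copies of a $2\times 2$ Hermitian block with eigenvalues $\Omega\pm\gamma>0$.

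For $n=0$ the routes differ. The paper simply writes out $\mathcal M_0$ at $U_0=Ae^{i\theta}$, invokes a symbolic computation (MAPLE) to list the eigenvalues
\[
\mu_1=2(\Omega+4A^2),\qquad
\mu_{2,3}=\Omega+12A^2\pm\sqrt{(\Omega-4A^2)^2+\frac{16\Omega A^2\gamma^2}{(\Omega+4A^2)^2}},
\]
and defers the positivity of $\mu_1,\mu_2,\mu_3$ to the companion paper \cite{ChernPel}. Your symmetry-based reduction is a more transparent alternative that reproduces exactly these formulas: the involution exchanging components $1\leftrightarrow 4$ and $2\leftrightarrow 3$ commutes with $\mathcal M_0$ and splits it into two Hermitian $2\times 2$ blocks with spectra $\{0,\mu_1\}$ and $\{\mu_2,\mu_3\}$, the zero arising from the gauge direction you identified. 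Two small corrections to your plan: (i) it is only the \emph{product} of $\mathcal P$ with the conjugation swap that commutes with $\mathcal M_0$ as a $\mathbb C$-linear map, not each factor separately, so you get two $2\times 2$ blocks rather than four $1\times 1$'s; (ii) those blocks are Hermitian with genuinely complex off-diagonal entries, not real symmetric. The ``main obstacle'' you anticipate is in fact benign once the eigenvalues are explicit: $\mu_1,\mu_2>0$ are immediate, and $\mu_3>0$ reduces to $2(\Omega+4A^2)^3>\Omega\gamma^2$, which follows from $\Omega>\gamma$ and $A>0$ (equivalently $|E|\neq E_0$). What your approach buys over the paper's is a self-contained derivation with no computer algebra and no external citation for the sign check.
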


\begin{proof}
If $\epsilon = 0$, the stationary state of Proposition \ref{proposition-soliton}
is given by $U_n = 0$ for every $n \neq 0$ and $U_0 = A e^{i \theta}$,
where $A$ and $\theta$ are defined by the parametrization (\ref{parameterization-A}).

For $n = 0$, the $4$-by-$4$ matrix block of $\mathcal{\tilde{M}}_n$ is given by
\begin{equation*} \label{block-L-0}
\arraycolsep=-1pt\def\arraystretch{1.5}
\mathcal{\tilde{M}}_0 = \left[\scalebox{0.8}{$\begin{array}{cccc} \Omega + 8 A^2 & 4 A^2 \cos(2 \theta) & -E - i \gamma + 8 A^2 \cos(2 \theta) & 4 A^2 \\
4 A^2 \cos(2 \theta) & \Omega + 8 A^2 & 4 A^2  & -E + i \gamma + 8 A^2 \cos(2 \theta) \\
-E + i \gamma + 8 A^2 \cos(2 \theta) & 4 A^2 & \Omega + 8 A^2 & 4 A^2 \cos(2 \theta) \\
4 A^2 & -E - i \gamma + 8 A^2 \cos(2 \theta) & 4 A^2 \cos(2 \theta) & \Omega + 8 A^2 \end{array}$} \right].
\end{equation*}
Using relations (\ref{parameterization-A}) and (\ref{branch-E-A}), as well as
symbolic computations with MAPLE, we found that the $4$-by-$4$ matrix block $\mathcal{\tilde{M}}_0$ has
a simple zero eigenvalue and three nonzero eigenvalues $\mu_1$, $\mu_2$, and $\mu_3$ given by
\begin{eqnarray}
\label{eigenvalue-1}
\mu_1 & = & 2 (\Omega + 4A^2), \\
\label{eigenvalue-2-3}
\mu_{2,3} & = & \Omega + 12 A^2 \pm \sqrt{(\Omega - 4A^2)^2 + \frac{16 \Omega A^2 \gamma^2}{(\Omega + 4A^2)^2}}.
\end{eqnarray}
It is shown in \cite{ChernPel} that $\mu_1, \mu_2, \mu_3 > 0$ for every point on the solution branch
with $\Omega > \gamma > 0$, and $|E| > E_0$.

For every $n \in \mathbb{Z} \backslash \{0\}$, the $4$-by-$4$ matrix block of $\mathcal{\tilde{M}}_n$ is given by
\begin{equation*}
\label{block-L-n}
\mathcal{\tilde{M}}_n = \left[\begin{array}{cccc} \Omega & 0 & - i \gamma  & 0 \\
0 & \Omega & 0  & + i \gamma  \\
+ i \gamma  & 0 & \Omega  & 0 \\
0 & - i \gamma & 0 & \Omega \end{array} \right].
\end{equation*}
Each block has two double eigenvalues $\mu_+$ and $\mu_-$ given by
$$
\mu_+ = \Omega + \gamma, \quad \mu_- = \Omega - \gamma,
$$
which are positive since $\Omega > \gamma$.
\end{proof}

By Proposition \ref{proposition-critical-point},
if $\epsilon = 0$, then $N_2(\phi) \geq 0$ for every $\phi \in \ell^2(\mathbb{Z})$
and, moreover, $N_2(\phi) = 0$ if and only if $\phi$ is proportional to an eigenvector supported
at $n = 0$. The existence of the zero eigenvalue at $\epsilon = 0$
is related to the gauge symmetry of the $\PT$-symmetric dNLS equation (\ref{PT-dNLS}).
Both for $\epsilon = 0$ and $\epsilon \neq 0$, there exists a nontrivial kernel
of the Hessian operator $\mathcal{H}''_E : \ell^2(\mathbb{Z}) \to \ell^2(\mathbb{Z})$
associated with the standing localized state $(U,V)$, thanks to the identity
\begin{equation}
\label{generalized-kernel}
\mathcal{H}''_E (\sigma \Phi) = 0,
\end{equation}
where the blocks of the eigenvector $\sigma \Phi$ are given by
\begin{equation}
\label{kernel}
(\sigma \Phi)_n := (U_n, - \bar{U}_n, V_n, -\bar{V}_n), \quad n \in \mathbb{Z}.
\end{equation}
In the limit of $\epsilon \to 0$, the eigenvector $\sigma \Phi$ is supported at
the central site $n = 0$ and it corresponds to the zero eigenvalue of
the matrix block $\mathcal{\tilde{M}}_0 = \mathcal{M}_0$. By
using Proposition \ref{proposition-critical-point} and identity (\ref{generalized-kernel}),
we can now state that if $\epsilon = 0$,
then $N_2(\phi) = 0$ if and only if $\phi \in {\rm span}\{\sigma \Phi\}$.

By the perturbation theory for linear operators (Appendix A), the strictly positive part
of $\Lambda''_E$ remains strictly positive for a sufficiently small $\epsilon$. On the other hand,
the simple zero eigenvalue may drift away from zero if $\epsilon \neq 0$.

In order to avoid a problem of degeneracy (or even slight negativity) of
$\Lambda''_E$, we introduce a constrained subspace of $\ell^2(\mathbb{Z})$ by
\begin{equation}
\ltzc = \{ \phi \in\ltzz \colon \quad \langle \sigma \Phi, \phi \rangle_\ltz = 0 \}.
\label{cond:l2c}
\end{equation}
If $\epsilon = 0$ and $\phi$ belongs to $\ltzc$, then the quadratic form $N_2(\phi)$ in (\ref{quadratic}) is strictly positive
and coercive. By the perturbation theory for linear operators (Appendix A), for $\epsilon \neq 0$ sufficiently small,
the quadratic part $N_2(\phi)$ given by (\ref{quadratic}) for $\phi \in \ltzc$,
remains strictly positive and coercive. This argument yields the proof of the following proposition.

\begin{proposition}
\label{proposition-coercivity}
Fix $\gamma > 0$, $\Omega > \gamma$, and $E \neq \pm E_0$. There exist $\epsilon_0 > 0$ sufficiently small
and $C_2 > 0$  such that for every $\epsilon \in (-\epsilon_0,\epsilon_0)$,
\begin{equation}
\label{coercivity}
N_2(\phi) \geq C_2 \| \phi \|_{\ell^2}^2 \quad \mbox{\rm for every} \;\; \phi \in \ltzc.
\end{equation}
\end{proposition}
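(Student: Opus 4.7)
The plan is to reduce the claim to a perturbation argument around the anti-continuum limit $\epsilon = 0$, using Proposition \ref{proposition-critical-point} as the baseline spectral input and standard perturbation theory for self-adjoint operators to propagate to small $\epsilon$.

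First, I establish coercivity at $\epsilon = 0$. There, $\Lambda''_E = \tilde{\mathcal{M}}$ is block-diagonal across lattice sites: by Proposition \ref{proposition-critical-point} the $n = 0$ block has spectrum $\{0, \mu_1, \mu_2, \mu_3\}$ with $\mu_j > 0$, while every block at $n \neq 0$ has double positive eigenvalues $\Omega \pm \gamma$. Setting $c_\ast := \min\{\Omega - \gamma, \mu_1, \mu_2, \mu_3\} > 0$, one has $\sigma(\Lambda''_E|_{\epsilon=0}) \subset \{0\} \cup [c_\ast, \infty)$ with the kernel one-dimensional and spanned by $\sigma\Phi|_{\epsilon=0}$, which is supported on $n = 0$. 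Hence for $\phi \in \ltzc$ at $\epsilon = 0$, the constraint $\langle \sigma\Phi, \phi \rangle_{\ell^2} = 0$ places $\phi$ in the positive spectral subspace, so $N_2(\phi) \geq (c_\ast/2)\|\phi\|_{\ell^2}^2$.

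Next I quantify the $\epsilon$-perturbation. The coupling $\epsilon \mathcal{L}$ has operator norm at most $4\epsilon$. The blocks $\tilde{\mathcal{M}}_n$ depend on $U_n(\epsilon)$, and by (\ref{bound-soliton}) we have $|U_0(\epsilon) - U_0(0)| \leq C_0\epsilon$ together with $|U_n(\epsilon)| \leq C_0\epsilon^{|n|}$ for $n \neq 0$. A block-wise estimate then yields $\|\tilde{\mathcal{M}}(\epsilon) - \tilde{\mathcal{M}}(0)\|_{\ell^2 \to \ell^2} \leq K\epsilon$, hence $\|\Lambda''_E(\epsilon) - \Lambda''_E(0)\|_{\ell^2 \to \ell^2} \leq K'\epsilon$. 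Kato's perturbation theory then gives that the part of the spectrum in $[c_\ast, \infty)$ remains in $[c_\ast/2, \infty)$ for $\epsilon$ small, that the isolated simple eigenvalue at $0$ persists as a simple eigenvalue $\lambda(\epsilon) = O(\epsilon)$, and that the corresponding normalized eigenvector $e(\epsilon)$ satisfies $\|e(\epsilon) - e(0)\|_{\ell^2} = O(\epsilon)$, where $e(0) := \sigma\Phi|_{\epsilon=0}/\|\sigma\Phi|_{\epsilon=0}\|_{\ell^2}$.

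To finish, I enforce the constraint against the perturbed eigenvector. For $\phi \in \ltzc$, decompose $\phi = \alpha\, e(\epsilon) + \phi_\perp$ with $\phi_\perp \perp e(\epsilon)$. Since $\sigma\Phi(\epsilon) = \|\sigma\Phi|_{\epsilon=0}\|_{\ell^2}\, e(\epsilon) + O(\epsilon)$ in $\ell^2$, the constraint $\langle \sigma\Phi(\epsilon), \phi\rangle_{\ell^2} = 0$ forces $|\alpha| \leq C\epsilon \|\phi_\perp\|_{\ell^2}$, so $\|\phi\|_{\ell^2}^2 \leq 2\|\phi_\perp\|_{\ell^2}^2$ for small $\epsilon$. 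Then
\begin{equation*}
\langle \Lambda''_E(\epsilon)\phi, \phi\rangle_{\ell^2} = \lambda(\epsilon)|\alpha|^2 + \langle \Lambda''_E(\epsilon)\phi_\perp, \phi_\perp\rangle_{\ell^2} \geq \frac{c_\ast}{2}\|\phi_\perp\|_{\ell^2}^2 - K' C^2 \epsilon^3 \|\phi_\perp\|_{\ell^2}^2 \geq \frac{c_\ast}{4}\|\phi_\perp\|_{\ell^2}^2,
\end{equation*}
which yields $N_2(\phi) \geq C_2 \|\phi\|_{\ell^2}^2$ with $C_2 := c_\ast/16$. The main obstacle is precisely this last step: the constraint in $\ltzc$ is orthogonality to $\sigma\Phi(\epsilon)$, not to the perturbed eigenvector $e(\epsilon)$ of the small eigenvalue. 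However, both are $O(\epsilon)$-close to $e(0)$, so the leakage along $e(\epsilon)$ is of size $O(\epsilon\|\phi_\perp\|_{\ell^2})$, and its contribution $\lambda(\epsilon)|\alpha|^2 = O(\epsilon^3)\|\phi_\perp\|_{\ell^2}^2$ is absorbed into the $c_\ast/2$ spectral gap for sufficiently small $\epsilon$.
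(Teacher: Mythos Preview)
Your proof is correct and follows essentially the same strategy as the paper: establish coercivity on $\ltzc$ at $\epsilon=0$ via Proposition~\ref{proposition-critical-point}, then invoke perturbation theory for self-adjoint operators to carry it over to small $\epsilon$. The paper's argument is a one-paragraph sketch that simply appeals to Kato's theorem, whereas you supply the quantitative details it leaves implicit --- in particular the observation that the constraint direction $\sigma\Phi(\epsilon)$ is not the perturbed eigenvector $e(\epsilon)$ of $\Lambda''_E$ (since $\sigma\Phi$ lies in $\ker\mathcal{H}''_E$, not $\ker\Lambda''_E$), and your control of the resulting $O(\epsilon)$ leakage via $|\alpha|\le C\epsilon\|\phi_\perp\|$ is exactly the step the paper suppresses.
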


Bounds (\ref{cubic-quartic}) and (\ref{coercivity}) allow us to estimate the principal part
of $\Delta$ in (\ref{eq:rearr}) from below, e.g.
$$
| \Delta - N_1(\phi) | \geq \left( C_2 - C_3 \| \phi \|_{\ell^2} - C_4 \| \phi \|_{\ell^2}^2 \right) \| \phi \|_{\ell^2}^2
\quad \mbox{\rm for every} \;\; \phi \in \ltzc.
$$
However, the linear part $N_1(\phi)$ is an obstacle for such estimates.
Therefore, we need to remove the obstacle by a local transformation.

\subsection{Removal of the linear part of $\Delta$}

Let us define
\begin{equation}
\label{near-identity-rho}
\phi = \tilde{\phi} + \rho,
\end{equation}
where $\tilde{\phi} = (\tilde{\bf u}_n,\overline{\tilde{\bf u}_n},\tilde{\bf v}_n,\overline{\tilde{\bf v}}_n)$
is a new variable and $\rho = (a,\bar{a},b,\bar{b})$ is a correction term to be found uniquely by removing the linear term
$N_1(\phi)$. Since the breather is $\PT$-symmetric with $V = \bar{U}$, we shall look for a $\PT$-symmetric
correction term with $b = \bar{a}$.

The easiest way of finding $a \in \ell^2(\mathbb{Z})$ is to write the Euler--Lagrange equations
for the energy function $\Lambda_E$ given by (\ref{lyapunov-function}). For the $\PT$-symmetric
solution with $v = \bar{u}$, the Euler--Lagrange equations for $\Lambda_E$ take the form
\begin{equation}
E u_n \delta_{n,0} = \epsilon \left( \bar{u}_{n+1} - 2 \bar{u}_n + \bar{u}_{n-1} \right) + i \gamma u_n + \Omega \bar{u}_n  +
6 |u_n|^2 \bar{u}_n + 2 u_n^3, \label{Euler-Lagrange}
\end{equation}
where $\delta_{n,0}$ is the Kronecker symbol supported at $n = 0$. Let $u = U + a$, where $U$ is a solution
of the stationary dNLS equation (\ref{eq:statPT}). Then, $a$ satisfies the nonlinear equation
\begin{eqnarray}
\nonumber
E a_n \delta_{n,0} - \Omega \bar{a}_n - i\gamma a_n
-\epsilon (\bar{a}_{n+1} - 2 \bar{a}_n + \bar{a}_{n-1})
- 12 |U_n|^2 \bar{a}_n \quad\quad &\\
- 6(U_n^2 + \bar{U}_n^2) a_n
- 6 U_n (a_n^2 + \bar{a}_n^2) - 12 \bar{U}_n |a_n|^2
- 6 |a_n|^2 \bar{a}_n - 2 a_n^3 & \!\!\! = E U_n (1-\delta_{n,0}),
\label{PT-near-identity}
\end{eqnarray}
where $n \in \mathbb{Z}$. Thanks to the expansion (\ref{eq:solutionse-precise}) in
Proposition \ref{proposition-soliton}, the right-hand side of system (\ref{PT-near-identity}) is small in $\epsilon$.
The following proposition characterizes a unique solution to system (\ref{PT-near-identity}).
This solution with $b = \bar{a}$ defines a unique $\rho$ in the transformation (\ref{near-identity-rho}).

\begin{proposition}
\label{proposition-near-identity}
Fix $\gamma > 0$, $\Omega > \gamma$, and $E \neq \pm E_0$.
There exist $\epsilon_0 > 0$ sufficiently small and $C_1 > 0$ such that
for every $\epsilon \in (-\epsilon_0,\epsilon_0)$, there exists a unique solution
$a \in \ell^2(\mathbb{Z})$ to the system (\ref{PT-near-identity}) such that
\begin{equation}
|a_0| \leq C_1 \epsilon^2, \quad |a_n| \leq C_1 |\epsilon|^{|n|}, \quad n \in \mathbb{Z} \backslash \{0\}.
\label{near-identity-solution}
\end{equation}
\end{proposition}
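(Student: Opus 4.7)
The plan is to apply the implicit function theorem three times in sequence, mirroring the three-step proof of Proposition \ref{proposition-soliton}.

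First, I would fix $a_0 \in \mathbb{C}$ small and solve the $n \neq 0$ part of (\ref{PT-near-identity}) for $\{a_n\}_{n\neq 0} \in \ell^2(\mathbb{Z}\setminus\{0\})$. At $\epsilon = 0$ one has $U_n = 0$ for every $n \neq 0$ by Proposition \ref{proposition-soliton}, and the system decouples to $-\Omega \bar{a}_n - i\gamma a_n - 6|a_n|^2 \bar{a}_n - 2 a_n^3 = 0$, with $a_n = 0$ an obvious solution whose Jacobian on $(a_n,\bar{a}_n)$ is the $\epsilon = 0$, $E = 0$ version of (\ref{jacobian-DxF}), with eigenvalues $\pm E_0 \neq 0$. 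Since the forcing $E U_n(1-\delta_{n,0})$ lies in $\ell^2$ with norm $\mathcal{O}(|\epsilon|)$ by (\ref{bound-soliton}), the IFT yields a unique small solution $\{a_n(a_0,\epsilon)\}_{n \neq 0}$ depending smoothly on both arguments, with $\ell^2$ norm of order $\mathcal{O}(|\epsilon| + |a_0|)$.

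Second, I would substitute this solution into the $n = 0$ component of (\ref{PT-near-identity}) to obtain a single nonlinear equation on $\mathbb{C}$ for $a_0$. At $\epsilon = 0$, its linearisation at $a_0 = 0$ is exactly the matrix (\ref{matrix-Jacobian}), which is invertible for every $\Omega > \gamma > 0$ and $|E| > E_0$. A second IFT application then gives a unique small $a_0 = a_0(\epsilon)$ with $a_0(0) = 0$. Crucially, the only $\epsilon$-dependent forcing of this reduced scalar equation is the coupling $-\epsilon(\bar{a}_1 - 2\bar{a}_0 + \bar{a}_{-1})$, and $|a_{\pm 1}| = \mathcal{O}(|\epsilon|)$ was already established in the first step; inverting the Jacobian therefore yields the sharper bound $|a_0| \le C_1 \epsilon^2$.

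Third, I would upgrade the $\ell^2$ estimate at $n \neq 0$ to the pointwise geometric decay by introducing the scaling $a_n = \epsilon^{|n|} c_{|n|}$ for $|n| \ge 1$, in direct analogy with (\ref{eq:solutionse-precise})--(\ref{eq:Wn}). Substituting into (\ref{PT-near-identity}) and equating orders of $\epsilon$ produces a recurrence whose leading order is precisely the triangular recurrence (\ref{leading-order}) already solved in Proposition \ref{proposition-soliton}: its $2 \times 2$ diagonal blocks again have eigenvalues $\pm E_0 \neq 0$, so the operator is invertible on $\ell^2(\mathbb{N})$ and a third IFT application provides the unique small perturbative correction. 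Reverting the scaling then gives the pointwise bound (\ref{near-identity-solution}). The main obstacle is the extra factor of $\epsilon$ in the bound at $n = 0$: a one-shot IFT applied directly to (\ref{PT-near-identity}) in $\ell^2(\mathbb{Z})$ would yield only $|a_0| = \mathcal{O}(|\epsilon|)$, and recovering $|a_0| = \mathcal{O}(\epsilon^2)$ forces the two-step splitting above, which simultaneously exploits that the right-hand side of (\ref{PT-near-identity}) vanishes at $n = 0$ and that the coupling of $a_0$ to the remainder of the lattice is itself proportional to $\epsilon$.
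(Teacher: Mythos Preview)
Your proposal is correct and mirrors the paper's proof essentially step for step: the same two-stage implicit function theorem for existence (off-site Jacobian blocks $\left[\begin{smallmatrix} -i\gamma & -\Omega \\ -\Omega & i\gamma \end{smallmatrix}\right]$ with eigenvalues $\pm E_0$, then the central block (\ref{matrix-Jacobian})), followed by the same scaling $a_n = \epsilon^{|n|} c_{|n|}$ and a third IFT for the pointwise decay. One small inaccuracy: the leading-order recurrence in your third step is not literally (\ref{leading-order}) but the forced analogue $-\Omega\bar c_n - i\gamma c_n - \bar c_{n-1} = E W_n$, whose diagonal blocks indeed have eigenvalues $\pm E_0$ (as you state) rather than $E\pm E_0$; this does not affect the argument.
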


\begin{proof}
The proof repeats the three steps in the proof of Proposition \ref{proposition-soliton}.
On the sites $n \in \mathbb{Z} \backslash \{0\}$, the Jacobian operator $D_x F(0,0)$ is block-diagonal
with identical blocks given by
\begin{equation}
\label{PT-rho-block-n}
\left[\begin{array}{cc}
-i\gamma & -\Omega \\
-\Omega & i \gamma
\end{array}\right].
\end{equation}
Each block is invertible thanks to the constraint $\Omega > \gamma$. On
the central site $n = 0$, the Jacobian operator $D_x F(A e^{i\theta},0)$ coincides with the block (\ref{matrix-Jacobian}),
which is invertible for every $\gamma \neq 0$, $\Omega > \gamma > 0$, and $|E| > E_0$ \cite{ChernPel}.
Thus, existence and uniqueness of solutions to the nonlinear system
~\eqref{PT-near-identity} for small $\epsilon$ is established with two applications
of the implicit function theorem.

In order to justify the bound (\ref{near-identity-solution}), we use (\ref{eq:solutionse-precise})
and substitute
\begin{equation}
a_0 = \epsilon^2 A_0, \quad a_n = \epsilon^{|n|} A_{|n|}, \quad \pm n \in \mathbb{N}
\label{near-identity-expansion}
\end{equation}
to the system (\ref{PT-near-identity}). The sequence $\{ A_n \}_{n \in \mathbb{N}}$ is found from the system
\begin{eqnarray}
\nonumber
-\Omega \bar{A}_n - i\gamma A_n -
\epsilon^2 \bar{A}_{n+1} + 2 \epsilon \bar{A}_n
- \bar{A}_{n-1} (1 - \delta_{n,1}) - \epsilon^2 A_0 \delta_{n,1} \quad & \\
\nonumber
- 6 \epsilon^{2|n|} (W_n^2 + \bar{W}_n^2) A_n
- 12 \epsilon^{2|n|} |W_n|^2 \bar{A}_n
- 6 \epsilon^{2|n|} W_n (A_n^2 + \bar{A}_n^2) & \\
- 12 \epsilon^{2|n|} \bar{W}_n |A_n|^2
- 6 \epsilon^{2|n|} |A_n|^2 \bar{A}_n - 2 \epsilon^{2|n|} A_n^3 &\!\!\! = E W_n,
\label{near-identity-leading-order}
\end{eqnarray}
whereas the term $A_0$ satisfies the nonlinear equation
\begin{eqnarray}
\nonumber
E A_0 - \Omega \bar{A}_0 - i\gamma A_0 -
2 \bar{A}_{1} + 2 \epsilon \bar{A}_0 - 6(U_0^2 + \bar{U}_0^2) A_0 - 12 |U_0|^2 \bar{A}_0  & \\
- 6 \epsilon^2 U_0 (A_0^2 + \bar{A}_0^2) - 12 \epsilon^2 \bar{U}_0 |A_0|^2 - 6 \epsilon^4 |A_0|^2 \bar{A}_0
- 2 \epsilon^2 A_0^3 & \!\!\! = 0.
\label{near-identity-leading-order-zero}
\end{eqnarray}
It follows from the invertibility of the block (\ref{matrix-Jacobian}) that there exists a unique
solution to the nonlinear equation (\ref{near-identity-leading-order-zero}) for $A_0 \in \mathbb{C}$
if $\epsilon$ is sufficiently small and $A_1 \in \mathbb{C}$ is given. The solution satisfies the bound
\begin{equation}
\label{bound-A-0}
|A_0| \leq C' |A_1|,
\end{equation}
where the positive constant $C'$ is $\epsilon$-independent. By substituting this solution for $A_0 \in \mathbb{C}$
to the system (\ref{near-identity-leading-order}), we observe that the leading-order system is given by
the recurrence equation
\begin{equation}
-\Omega \bar{A}_n^{(0)} - i\gamma A_n^{(0)} - \bar{A}_{n-1}^{(0)} = E W_n, \quad n \in \mathbb{N},
\label{leading-order-A-n}
\end{equation}
where $A_0^{(0)} = 0$. Since $\Omega > \gamma$, there exists a unique solution $A^{(0)} \in \ell^{\infty}(\mathbb{N})$
of the leading-order system (\ref{leading-order-A-n}). Moreover, because the Jacobian operator $D_x F(0,0)$ is
one-to-one and onto, the solution $A^{(0)}$ is actually in $\ell^2(\mathbb{N})$. By using the implicit function theorem again,
for $\epsilon \neq 0$ sufficiently small, there exists a unique solution $A \in \ell^2(\mathbb{N})$ to the system (\ref{near-identity-leading-order})
satisfying the bound
\begin{equation}
\label{bound-A-n}
\| A - A^{(0)} \|_{\ell^2(\mathbb{N})} < C'' |\epsilon|,
\end{equation}
where the positive constant $C''$ is $\epsilon$-independent.
Combining bounds (\ref{bound-A-0}), (\ref{bound-A-n}) with the representation
(\ref{near-identity-expansion}) yields the bounds (\ref{near-identity-solution}).
\end{proof}

By using the transformation (\ref{near-identity-rho}), we rewrite the expansion (\ref{eq:rearr})
in the following equivalent form
\begin{equation}
\label{eq:rearr-new}
\Delta = \Delta_0 + \Delta_2(\tilde{\phi}) + \Delta_3(\tilde{\phi}) + \Delta_4(\tilde{\phi}),
\end{equation}
where the $\tilde{\phi}$-independent term $\Delta_0$ is given by
$$
\Delta_0 := N_1(\rho) + N_2(\rho) + N_3(\rho) + N_4(\rho),
$$
the quadratic and cubic parts $\Delta_2(\tilde{\phi})$ and $\Delta_3(\tilde{\phi})$ are $\epsilon$-close to
$N_2(\tilde{\phi})$ and $N_3(\tilde{\phi})$, while $\Delta_4(\tilde{\phi}) = N_4(\tilde{\phi})$.
The following proposition characterizes each term of the decomposition (\ref{eq:rearr-new}).
The new definitions of constants override the previous definitions of constants.

\begin{proposition}
\label{proposition-decomposition}
Fix $\gamma > 0$, $\Omega > \gamma$, and $E \neq \pm E_0$.
There exist $\epsilon_0 > 0$ sufficiently small and $C_0, C_1, C_2, C_3, C_4 > 0$ such that
for every $\epsilon \in (-\epsilon_0,\epsilon_0)$, we have
\begin{equation}
|\Delta_0| \leq C_0  \epsilon^2,
\label{eq:lambdatilde}
\end{equation}
\begin{equation}
\| \tilde{\phi}\|_\ltz \le \| \phi \|_\ltz + C_1 \epsilon, \quad
\| \phi \|_\ltz \le \| \tilde{\phi} \|_\ltz + C_1 \epsilon,
\label{bnd:phi}
\end{equation}
\begin{equation}
\label{cubic-quartic-tilde}
\| \Delta_3(\tilde{\phi}) \|_\ltz \leq C_3 \|\tilde{\phi}\|^3_\ltz, \quad
\| \Delta_4(\tilde{\phi}) \|_\ltz \leq C_4 \|\tilde{\phi}\|^4_\ltz,
\end{equation}
and
\begin{equation}
\label{coercivity-tilde}
\Delta_2(\tilde{\phi}) \geq C_2 \| \tilde{\phi} \|_{\ell^2}^2 \quad \mbox{\rm for every} \;\; \tilde{\phi} \in \ltzc.
\end{equation}
\end{proposition}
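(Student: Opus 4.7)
The plan is to expand $\Delta = \Lambda_E(\Phi + \rho + \tilde\phi) - \Lambda_E(\Phi)$ as a polynomial of degree at most four in $\tilde\phi$ (since $\Lambda_E$ is quartic), identify the pieces $\Delta_0, \Delta_2, \Delta_3, \Delta_4$ by homogeneity, and estimate each using the $\rho$-bounds of Proposition \ref{proposition-near-identity}. Writing the Taylor expansion around $\Phi + \rho$,
\[
\Delta = \bigl[\Lambda_E(\Phi+\rho) - \Lambda_E(\Phi)\bigr] + \langle D\Lambda_E(\Phi+\rho),\tilde\phi\rangle + \tfrac12\langle D^2\Lambda_E(\Phi+\rho)\tilde\phi,\tilde\phi\rangle + (\text{cubic and quartic in }\tilde\phi),
\]
I identify the constant term with $\Delta_0 = N_1(\rho)+N_2(\rho)+N_3(\rho)+N_4(\rho)$ from (\ref{eq:rearr}), and observe that the linear-in-$\tilde\phi$ term vanishes by construction: $\rho$ was chosen in Proposition \ref{proposition-near-identity} precisely so that $U+a$ solves the Euler--Lagrange equation (\ref{Euler-Lagrange}) for $\Lambda_E$, i.e.\ $D\Lambda_E(\Phi+\rho)=0$ under the $\PT$-reduction $V=\bar U$. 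Equivalently, grouping the expansion of $N_k(\tilde\phi+\rho)$ by degree in $\tilde\phi$ yields $\Delta_4(\tilde\phi)=N_4(\tilde\phi)$, $\Delta_3(\tilde\phi)=N_3(\tilde\phi)+4B_4(\rho,\tilde\phi,\tilde\phi,\tilde\phi)$, and $\Delta_2(\tilde\phi)=N_2(\tilde\phi)+3B_3(\rho,\tilde\phi,\tilde\phi)+6B_4(\rho,\rho,\tilde\phi,\tilde\phi)$, where $B_3,B_4$ are the symmetric multilinear forms polarizing $N_3,N_4$.

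Next I record the size of $\rho$: from Proposition \ref{proposition-near-identity}, $|a_0|\le C_1\epsilon^2$ and $|a_n|\le C_1|\epsilon|^{|n|}$ for $n\neq 0$, whence a geometric summation gives $\|\rho\|_{\ell^2}=\mathcal{O}(\epsilon)$. Bounds (\ref{bnd:phi}) are immediate from the triangle inequality applied to $\phi=\tilde\phi+\rho$ with $\|\rho\|_{\ell^2}\le C_1\epsilon$. For (\ref{cubic-quartic-tilde}), $\Delta_4(\tilde\phi)=N_4(\tilde\phi)$ is bounded directly by (\ref{cubic-quartic}), while the quartic cross-term in $\Delta_3$ is controlled via $\|\cdot\|_{\ell^p}\le\|\cdot\|_{\ell^2}$ by $C\|\rho\|_{\ell^2}\|\tilde\phi\|_{\ell^2}^3\le C'\epsilon\|\tilde\phi\|_{\ell^2}^3$, which is absorbed into $C_3\|\tilde\phi\|_{\ell^2}^3$ after enlarging the constant. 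For (\ref{coercivity-tilde}), the two correction forms in $\Delta_2-N_2$ are symmetric bounded bilinear forms on $\ell^2(\mathbb{Z})$ with operator norms $\mathcal{O}(\|\rho\|_{\ell^2})=\mathcal{O}(\epsilon)$ and $\mathcal{O}(\|\rho\|_{\ell^2}^2)=\mathcal{O}(\epsilon^2)$; restricting to $\ltzc$ and invoking Proposition \ref{proposition-coercivity} gives $\Delta_2(\tilde\phi)\ge(C_2-C\epsilon)\|\tilde\phi\|_{\ell^2}^2\ge\tfrac12 C_2\|\tilde\phi\|_{\ell^2}^2$ for $\epsilon_0$ small.

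The main obstacle is the sharp bound (\ref{eq:lambdatilde}). A crude use of $\|\rho\|_{\ell^2}=\mathcal{O}(\epsilon)$ would only yield $|N_1(\rho)|=\mathcal{O}(\epsilon)$, which is too weak. The point is that the sum in (\ref{linear}) runs only over $n\neq 0$ (because $\Lambda_E$ retains the charge only at the central site), and for $n\neq 0$ both $|U_n|,|V_n|\le C_0|\epsilon|^{|n|}$ (Proposition \ref{proposition-soliton}) and $|a_n|\le C_1|\epsilon|^{|n|}$ (Proposition \ref{proposition-near-identity}) carry the same geometric smallness, so the product summed over $n$ gives $|N_1(\rho)|\le C\sum_{n\neq 0}\epsilon^{2|n|}=\mathcal{O}(\epsilon^2)$. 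With this in hand, $|N_2(\rho)|=\mathcal{O}(\epsilon^2)$ by boundedness of $\Lambda''_E$ on $\ell^2(\mathbb{Z})$ and $\|\rho\|_{\ell^2}^2=\mathcal{O}(\epsilon^2)$, while the cubic and quartic pieces are $\mathcal{O}(\epsilon^3)$ and $\mathcal{O}(\epsilon^4)$ by the analogue of (\ref{cubic-quartic}) applied to $\rho$. Summing yields (\ref{eq:lambdatilde}) and completes the proof modulo routine multilinear bookkeeping.
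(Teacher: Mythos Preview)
Your argument is correct and follows essentially the same route as the paper: expand $\Delta$ around $\Phi+\rho$, use the multilinear polarizations of $N_3,N_4$ to see that $\Delta_2,\Delta_3,\Delta_4$ differ from $N_2,N_3,N_4$ by $\mathcal{O}(\|\rho\|_{\ell^2})$-perturbations, and then obtain the sharp bound $|N_1(\rho)|=\mathcal{O}(\epsilon^2)$ by pairing the geometric decay $|U_n|\lesssim|\epsilon|^{|n|}$ with $|a_n|\lesssim|\epsilon|^{|n|}$ over the off-center sites. The paper's proof is slightly terser (it merely says $\Delta_k$ is ``$\epsilon$-close'' to $N_k$ rather than writing out $B_3,B_4$), but the content is the same; the only additional remark the paper makes is that the $\PT$-symmetry $b=\bar a$ forces $\rho\in\ell^2_c(\mathbb{Z})$, so that $\phi\in\ell^2_c(\mathbb{Z})\Leftrightarrow\tilde\phi\in\ell^2_c(\mathbb{Z})$, which is not part of the proposition's statement but is used downstream.
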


\begin{proof}
Since $\rho$ is constructed in Proposition \ref{proposition-near-identity} with the $\PT$-symmetric
correction term $b = \bar{a}$, it is true that $\rho \in \ell^2_c(\mathbb{Z})$.
Therefore, the condition $\phi \in \ell^2_c(\mathbb{Z})$ is satisfied if and only if
$\tilde{\phi} \in \ell^2_c(\mathbb{Z})$.
Since the constants $C_2$, $C_3$, and $C_4$ in the bounds (\ref{cubic-quartic}) and (\ref{coercivity}) are
$\epsilon$-independent, whereas $\Delta_2$, $\Delta_3$, and $\Delta_4$ are $\epsilon$-close to
$N_2$, $N_3$, and $N_4$ in space $\ell^2(\mathbb{Z})$, then the bounds
(\ref{cubic-quartic-tilde}) and (\ref{coercivity-tilde}) follow from
the bounds (\ref{cubic-quartic}) and (\ref{coercivity}) respectively,
thanks to the smallness of $\epsilon$.

In order to obtain the bounds (\ref{eq:lambdatilde}) and (\ref{bnd:phi}),
we use (\ref{bound-soliton}) and (\ref{near-identity-solution})
to obtain
\begin{equation}
\label{estimates-above}
|N_1(\rho)| \leq C \sum_{n \in \mathbb{Z} \backslash \{0\}} \epsilon^{2|n|} \leq C'  \epsilon^2,  \quad
\| \rho\|^2_\ltz \leq C \left( \epsilon^4 + \sum_{n \in \mathbb{Z} \backslash \{0\}} \epsilon^{2|n|} \right) \leq C' \epsilon^2,
\end{equation}
where the positive constants $C$, $C'$ are $\epsilon$-independent and $\epsilon$ is sufficiently small.
Since $N_2$, $N_3$, and $N_4$ are quadratic, cubic, and quartic respectively, the bound (\ref{eq:lambdatilde})
is obtained from the triangle inequality and the estimates (\ref{estimates-above}). The bounds
(\ref{bnd:phi}) follow from the triangle inequality and the second estimate (\ref{estimates-above}).
\end{proof}

\subsection{Time evolution of $\Delta$}

We recall that $H$ given by (\ref{energy-PT-dNLS}) is a constant of motion for the $\PT$-symmetric dNLS equation (\ref{PT-dNLS}).
On the other hand, the part of $Q$ at $n = 0$ satisfies the balance equation
\begin{equation}
\label{zero-flux-Delta}
i \frac{d}{dt} \left( u_0 \bar{v}_0 + \bar{u}_0 v_0 \right) = \epsilon \left[
\bar{u}_0 (u_1 + u_{-1}) - u_0 (\bar{u}_1 + \bar{u}_{-1}) + \bar{v}_0 (v_1 + v_{-1}) - v_0 (\bar{v}_1 + \bar{v}_{-1}) \right].
\end{equation}

If the initial data $\psi(0) \in\ltzz$ is close to $\Phi$ in the sense of the bound
$\| \psi(0) - \Phi \|_{\ell^2} \leq \delta$, then the unique solution
$\psi(t) \in C^1(\mathbb{R},\ltzz)$ to the $\PT$-symmetric dNLS equation (\ref{PT-dNLS})
with the same initial data can be defined in the modulation form
\begin{equation}
\psi(t) = e^{-i \alpha(t) \sigma} \left[ \Phi + \phi(t) \right],
\label{eq:psi}
\end{equation}
as long as the solution remains close to the orbit of $\Phi$ under the phase rotation
in the sense of the bound (\ref{bound-final}). Note again that the vectors $\psi$, $\Phi$, and $\phi$ 
are extended $4$-component vectors at each lattice site compared to the $2$-component vectors 
used in the formulation of Theorem \ref{theorem-main}. As a result, the gauge symmetry 
is represented by the matrix operator $\sigma$ defined by (\ref{kernel}).

The decomposition (\ref{eq:psi}) is defined uniquely only if a constraint is imposed to $\phi(t) \in \ell^2(\mathbb{R})$.
In agreement with the definition (\ref{cond:l2c}) on the constrained space $\ell^2_c(\mathbb{R})$, we
impose the orthogonality condition:
\begin{equation}
	\langle \sigma \Phi,\phi(t) \rangle_\ltz = 0.
\label{cond:ortho}
\end{equation}
The decomposition (\ref{eq:psi}) under the orthogonality condition (\ref{cond:ortho})
and the modulation equation for $\alpha$ are justified in
the next section. Here we estimate how the time-dependent energy quantity $\Delta$ changes along
the solution $\psi(t) \in C^1(\mathbb{R},\ltzz)$ represented by the decomposition (\ref{eq:psi}).

The rate of change of $\Delta$ defined by (\ref{eq:convex}) along the solution $\psi(t)$
represented by (\ref{eq:psi}) is obtained from (\ref{zero-flux-Delta}) as follows:
\begin{equation}
\label{zero-flux}
\left| \frac{d\Delta}{dt} \right| \le C_E \epsilon
\| \Phi_0 + \phi_0 \| \left( \|\Phi_1 + \phi_1 \| + \| \Phi_{-1} + \phi_{-1} \| \right)
\end{equation}
where $C_E$ is a positive $\epsilon$-independent constant. By using the bounds (\ref{bound-soliton}) and (\ref{near-identity-solution}),
the transformation (\ref{near-identity-rho}),
and the triangle inequality (\ref{bnd:phi}), we obtain from (\ref{zero-flux}):
\begin{eqnarray}
\nonumber
\left| \frac{d\Delta}{dt} \right| & \le & C_E \epsilon \left( 1 + \| \tilde{\phi}_0 \| \right)
\left( \epsilon + \| \tilde{\phi}_1 \| + \| \tilde{\phi}_{-1} \| \right) \\
& \le & C_E' \epsilon \left( \epsilon + \| \tilde{\phi}_0 \| + \| \tilde{\phi}_1 \| + \| \tilde{\phi}_{-1} \|
+ \| \tilde{\phi}_0 \|^2 + \| \tilde{\phi}_1 \|^2 + \| \tilde{\phi}_{-1} \|^2 \right),
	\label{eq:derbnd}
\end{eqnarray}
where $C_E'$ is another positive $\epsilon$-independent constant.

Let us now define a ball in the space $\ell^2_c(\mathbb{Z})$
of a finite size $K > 0$ by
\begin{equation}
\mathcal{M}_K := \left\{ \phi \in \ell^2_c(\mathbb{Z}) : \quad \| \phi \|_{\ell^2} \leq K \right\}.
\label{ball}
\end{equation}
From estimates (\ref{coercivity-tilde}) and (\ref{cubic-quartic-tilde}), there is a positive $K$-dependent
constant $C_K$ such that
\begin{equation}
\label{coercivity-final}
\Delta - \Delta_0 \geq C_K \| \tilde{\phi} \|_{\ell^2}^2 \quad \mbox{\rm for every} \;\; \tilde{\phi} \in  \mathcal{M}_K.
\end{equation}
By using coercivity (\ref{coercivity-final}) in the ball $\mathcal{M}_K$
and the Young inequality
$$
|a b| \le  \frac{\alpha}{2}  a^2 + \frac{1}{2\alpha} b^2, \quad a,b \in \mathbb{R},
$$
where $\alpha \in \mathbb{R}^+$ is arbitrary, we estimate
$$
\| \tilde{\phi}_0 \| + \| \tilde{\phi}_1 \| + \| \tilde{\phi}_{-1} \|\le \sqrt {C_K^{-1} (\Delta - \Delta_0) } \le
\frac{\alpha}{2C_K} + \frac{1}{2 \alpha} (\Delta - \Delta_0),
$$
where $\Delta - \Delta_0 \geq 0$ follows from (\ref{coercivity-final}).
Substituting this estimate to \eqref{eq:derbnd} yields
\begin{equation}
\left| \frac{d\Delta}{dt} \right| \le C_E \epsilon \left( \epsilon +
\alpha + (\Delta - \Delta_0) + \alpha^{-1} ( \Delta - \Delta_0 ) \right),
	\label{ineq:dlam}
\end{equation}
for another constant $C_E > 0$. In what follows, we will set the scaling
parameter $\alpha$ such that $\alpha \to 0$ as $\epsilon \to 0$.
Therefore, the constant $\alpha^{-1}$ is much larger compared to unity. Integrating (\ref{ineq:dlam})
with an integrating factor,
$$
\left| \frac{d}{dt} e^{-C_E \epsilon \alpha^{-1} t} (\Delta - \Delta_0) \right| \le
C_E \epsilon (\epsilon + \alpha) e^{-C_E \epsilon \alpha^{-1} t},
$$
we obtain with the Gronwall's inequality:
\begin{eqnarray}
\nonumber
\Delta(t) - \Delta_0 & \le & e^{C_E \epsilon \alpha^{-1} t} \left( \Delta(0) - \Delta_0 +
C_E \epsilon (\epsilon + \alpha) \int_0^t e^{-C_E \epsilon \alpha^{-1} s} ds \right) \\
& \le & e^{C_E \epsilon \alpha^{-1} t} \left( \Delta(0) - \Delta_0 +
\alpha (\epsilon + \alpha)\right).
\label{eq:lambd}
\end{eqnarray}
It is clear from the estimate (\ref{eq:lambd}) that $\Delta(t) - \Delta_0$ is small
only if $\alpha \to 0$ as $\epsilon \to 0$. If $\alpha = \epsilon$, then
$$
\alpha = \epsilon : \quad \Delta(t) - \Delta_0 \leq e^{C_E t} \left( \Delta(0) + \epsilon^2 \right),
$$
where the bound (\ref{eq:lambdatilde}) has been used. Therefore, if $\Delta(0)$ is small,
then $\Delta(t)$ remains small on the time scale $t = \mathcal{O}(1)$ as $\epsilon \to 0$.
On the other hand, if $\alpha = \epsilon^{1/2}$, then the estimate (\ref{eq:lambd}) yields
$$
\alpha = \epsilon^{1/2} : \quad \Delta(t) - \Delta_0 \leq e^{C_E \epsilon^{1/2} t} \left( \Delta(0) + \epsilon \right),
$$
so that $\Delta(t)$ remains small on the time scale $t = \mathcal{O}(\epsilon^{-1/2})$.

The initial value for $\Delta(0)$ is estimated from (\ref{eq:rearr-new}), (\ref{eq:lambdatilde}), and (\ref{bnd:phi}).
By (\ref{bnd:phi}), for every $\phi(0) \in \mathcal{M}_{\delta}$ with $\delta > 0$ sufficiently small, we have
$\tilde{\phi}(0) \in \mathcal{M}_K$
with $K = \delta + \epsilon$ and there are positive ($\epsilon$,$\delta$)-independent constants $C,C'$ such that
\begin{equation}
\label{Delta-initial}
|\Delta(0)| \leq C \left( \epsilon^2 + \| \tilde{\phi}(0) \|_{\ell^2}^2 \right) \leq C' (\epsilon^2 + \delta^2).
\end{equation}
By using the triangle inequality (\ref{bnd:phi}), coercivity (\ref{coercivity-final}), and the bound (\ref{Delta-initial}),
we finally obtain the following two estimates:
$$
\alpha = \epsilon : \quad \| \phi(t) \|_{\ell^2}^2 \leq C e^{C_E t} \left( \epsilon^2 + \delta^2 \right)
$$
and
$$
\alpha = \epsilon^{1/2} : \quad \| \phi(t) \|_{\ell^2}^2  \leq C e^{C_E \epsilon^{1/2} t} \left( \epsilon + \delta^2 \right),
$$
where the positive constant $C$ is independent of $\epsilon$ and $\delta$.
Comparing with the bound (\ref{bound-final}) stated in Theorem \ref{theorem-main}, we obtain
\begin{equation}
\label{estimate-parameters-short}
\alpha = \epsilon, \quad t_0 \lesssim 1 : \quad C (\epsilon + \delta) \leq \nu
\end{equation}
and
\begin{equation}
\label{estimate-parameters}
\alpha = \epsilon^{1/2}, \quad t_0 \lesssim \epsilon^{-1/2} : \quad C \left( \epsilon^{1/2} + \delta \right) \leq \nu,
\end{equation}
where $t_0$ is the final time in the bound (\ref{bound-final}) and $C$ is another positive ($\epsilon$,$\delta$)-independent constant.
For every $\nu > 0$, there exist $\epsilon_0 > 0$ and $\delta > 0$ such that
inequalities (\ref{estimate-parameters-short}) and (\ref{estimate-parameters})
can be satisfied for every $\epsilon \in (0,\epsilon_0)$.
The statement of Theorem \ref{theorem-main} is formulated on the extended time
scale corresponding to the inequality (\ref{estimate-parameters}).
The short time scale corresponding to the inequality
(\ref{estimate-parameters-short}) is mentioned in Remark \ref{remark-shorter-scale}.

\subsection{Modulation equations in $\ell^2_c(\mathbb{Z})$}

It remains to show how we can define the decomposition (\ref{eq:psi})
under the constraint (\ref{cond:ortho}) for a solution to the $\PT$-symmetric dNLS equation (\ref{PT-dNLS})
and how the evolution of $\alpha$ in time $t$ can be estimated from the modulation equation.
Here we modify standard results on modulation equations, see, e.g., Lemmas 6.1 and 6.3 in
\cite{GP} for similar analysis. For reader's convenience, we only give the main ideas behind the proofs.

\begin{proposition}
\label{lemma-orbit}
There exist constants $\nu_0 \in (0,1)$ and $C_0\ge 1$ such that,
for any $\psi \in \ltzz$ satisfying
\begin{equation}
d:= \inf_{\alpha\in\mathbb{R}} \| e^{i \alpha \sigma} \psi - \Phi \|_{\ltz} \le \nu_0, \label{eq:inf}
\end{equation}
one can find modulation parameter $\alpha \in\mathbb{R}/(2\pi\mathbb{Z})$
such that $\psi = e^{- i \alpha \sigma}(\Phi + \phi)$  with $\phi \in \ell^2_c(\mathbb{Z})$
satisfying $d \le \| \phi \|_{\ltz} \le C_0 d$.
\end{proposition}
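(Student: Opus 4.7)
The argument follows the standard tube-neighborhood scheme, mirroring Lemmas 6.1 and 6.3 of \cite{GP}. It rests on an application of the implicit function theorem combined with the gauge equivariance of the problem.

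\textit{Step 1 (Setup and implicit function theorem).} The constraint $\langle \sigma\Phi, \phi\rangle_{\ltz} = 0$ defining $\ltzc$ in (\ref{cond:l2c}) amounts to a single real scalar condition, which is matched by the single real modulation parameter $\alpha$. I define
\[
G : \mathbb{R}/(2\pi\mathbb{Z}) \times \ltzz \to \mathbb{R}, \qquad G(\alpha, \psi) := \langle \sigma\Phi, e^{i\alpha\sigma}\psi - \Phi\rangle_{\ltz},
\]
interpreted as the real (or imaginary) scalar quantity used in (\ref{cond:l2c}). Then $G(0, \Phi) = 0$, and using $\sigma^2 = I$ one gets
\[
|\partial_\alpha G(0, \Phi)| \;=\; |\langle \sigma\Phi, i\sigma^2 \Phi\rangle_{\ltz}| \;=\; \|\Phi\|_{\ltz}^2 \;>\; 0.
\]
The implicit function theorem (Appendix A) then produces $r_0, \delta_0 > 0$ and a $C^1$ map $\alpha_\bullet : B_{\ltz}(\Phi, r_0) \to (-\delta_0, \delta_0)$ with $G(\alpha_\bullet(\psi), \psi) = 0$ and $\alpha_\bullet(\Phi) = 0$. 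Setting $\phi := e^{i\alpha_\bullet(\psi)\sigma}\psi - \Phi$ yields $\phi \in \ltzc$ for $\psi$ in a small $\ltz$-neighborhood of $\Phi$.

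\textit{Step 2 (Extension along the gauge orbit).} The identity $G(\alpha + \beta, e^{-i\beta\sigma}\psi) = G(\alpha, \psi)$ shows that the local solution is gauge-equivariant: $\alpha_\bullet(e^{-i\beta\sigma}\psi) \equiv \alpha_\bullet(\psi) - \beta \pmod{2\pi}$. Because the orbit $\{e^{-i\beta\sigma}\Phi\}_{\beta \in \mathbb{R}/(2\pi\mathbb{Z})}$ is compact in $\ltzz$, finitely many $r_0$-balls centred at points of the orbit cover the open tube $\{\psi : d < \nu_0\}$ as soon as $\nu_0 \le r_0/2$, and the local IFT solutions agree on overlaps by equivariance, so $\alpha_\bullet$ extends consistently to the whole $\nu_0$-tube.

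\textit{Step 3 (Two-sided bound on $\phi$).} The lower inequality $d \le \|\phi\|_{\ltz}$ is immediate from the definition of $d$ in (\ref{eq:inf}), because $\phi + \Phi = e^{i\alpha_\bullet(\psi)\sigma}\psi$ is one of the vectors over which the infimum is taken. For the upper bound, by gauge equivariance I may assume that the infimum in (\ref{eq:inf}) is attained at $\beta = 0$, so that $\psi = \Phi + \eta$ with $\|\eta\|_{\ltz} = d$. The $C^1$-regularity of $\alpha_\bullet$ furnished by the IFT gives $|\alpha_\bullet(\psi)| \le C_1 d$, and the expansion $e^{i\alpha\sigma} - I = i\alpha\sigma + O(\alpha^2)$ combined with the triangle inequality produces
\[
\|\phi\|_{\ltz} \;\le\; \|(e^{i\alpha_\bullet(\psi)\sigma} - I)\Phi\|_{\ltz} + \|\eta\|_{\ltz} \;\le\; \bigl(C_1 \|\Phi\|_{\ltz} + 1\bigr) d + O(d^2),
\]
which yields the constant $C_0$ after choosing $\nu_0$ small enough that the linearization dominates the $O(d^2)$ correction.

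\textit{Main obstacle.} The only subtlety is the careful identification of the single-real-constraint/single-real-parameter structure so that the transversality condition $\partial_\alpha G(0, \Phi) \neq 0$ is manifest; once this is set up, patching over the compact $S^1$ orbit and the two-sided bound are short computations. Notice that none of the constants $\nu_0, C_0$ depend on the small coupling parameter $\epsilon$, as the whole argument is performed at the level of $\Phi \in \ltzz$ and uses only smoothness and the nonvanishing of $\|\Phi\|_{\ltz}$.
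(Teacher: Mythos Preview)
Your proof is correct and follows essentially the same route as the paper: both apply the implicit function theorem to the same function $G(\alpha,\psi)=\langle\sigma\Phi,\,e^{i\alpha\sigma}\psi-\Phi\rangle_{\ell^2}$, using the transversality $|\partial_\alpha G|=\|\Phi\|_{\ell^2}^2+O(d)>0$, and then recover the two-sided bound on $\|\phi\|_{\ell^2}$ via $|\alpha-\alpha_0|\lesssim d$ and the triangle inequality. The only cosmetic difference is that the paper applies the IFT directly near the minimizer $\alpha_0$ of (\ref{eq:inf}), whereas you apply it at the base point $(0,\Phi)$ and then extend along the compact $S^1$ orbit by equivariance; these are equivalent packagings of the same argument.
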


\begin{proof}
We consider a function $f : \mathbb{R} \to \mathbb{R}$ given by
$$
f(\alpha) := \langle \sigma \Phi, e^{i \alpha \sigma} \psi - \Phi \rangle_{\ell^2} = 0.
$$
Let $\alpha_0 \in \mathbb{R}/(2\pi\mathbb{Z})$ be the argument of the infimum in (\ref{eq:inf}).
Then, $|f(\alpha_0)| \leq d \| \Phi \|_{\ell^2}$ by the Cauchy--Schwartz inequality.
On the other hand, the derivative $f'(\alpha_0)$ is bounded away from zero because
$$
f'(\alpha_0) = \langle \sigma \Phi, i \sigma e^{i \alpha_0 \sigma} \psi \rangle_{\ell^2}
= i \| \Phi \|_{\ell^2}^2 + i \langle \Phi, e^{i \alpha_0 \sigma} \psi - \Phi \rangle_{\ell^2},
$$
where the second term is bounded by $d \| \Phi \|_{\ell^2}$ and the first term is $d$-independent.
The function $f : \mathbb{R} \to \mathbb{R}$ is smooth in $\alpha$. By the implicit function theorem,
for any $d > 0$ sufficiently small, there is a unique solution of the equation $f(\alpha) = 0$ for $\alpha$ near $\alpha_0$
such that $|\alpha - \alpha_0| \leq C d$, where $C$ is $d$-independent. By the triangle inequality,
$\|  \phi \|_{\ltz} \le C_0 d$, where $C_0$ is also $d$-independent.
\end{proof}

\begin{proposition}
\label{lemma-modulation}
Assume that the solution $\psi(t)$ to the $\PT$-symmetric dNLS equation (\ref{PT-dNLS})
satisfies $d(t) \leq \nu$ for every $t \in [0,t_0]$, where
$d(t)$ is given by (\ref{eq:inf}). Then the modulation parameter $\alpha(t)$
defined by (\ref{eq:psi}) in Proposition \ref{lemma-orbit} is a
continuously differentiable function of $t$ and there is a positive constant $C$
such that $|\dot{\alpha}-E| \leq C \nu$, for every $t \in [0,t_0]$.
\end{proposition}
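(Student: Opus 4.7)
The plan is to split the argument into two parts: first to establish that $\alpha(t)$ is continuously differentiable, then to derive a scalar modulation equation for $\dot\alpha$ whose right-hand side is controlled by $\|\phi(t)\|_{\ell^2}\lesssim\nu$.

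For the regularity, I would introduce the smooth scalar function
\[
f(\alpha,t) := \langle \sigma\Phi,\; e^{i\alpha\sigma}\psi(t) - \Phi \rangle_{\ell^2}.
\]
Since $d(t)\le\nu\le\nu_0$ by hypothesis, Proposition \ref{lemma-orbit} applied pointwise in $t$ produces a unique root $\alpha(t)$ of $f(\cdot,t)=0$ near the minimizer. From Proposition \ref{proposition-existence} we have $\psi\in C^1([0,t_0],\ell^2(\mathbb{Z}))$, and from the proof of Proposition \ref{lemma-orbit} the derivative $\partial_\alpha f$ equals $i\|\Phi\|_{\ell^2}^2 + \mathcal{O}(\nu)$, hence is bounded below uniformly in $t$. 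The implicit function theorem then gives $\alpha\in C^1([0,t_0])$ and in particular $\phi\in C^1([0,t_0],\ell^2_c(\mathbb{Z}))$.

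For the bound on $\dot\alpha$, differentiate the orthogonality constraint $\langle\sigma\Phi,\phi(t)\rangle_{\ell^2}=0$ in $t$. Writing the PT-symmetric dNLS equation as $i\dot\psi=F(\psi)$ and using gauge invariance $e^{i\alpha\sigma}F(\psi)=F(e^{i\alpha\sigma}\psi)$ together with the stationary identity $F(\Phi)=E\sigma\Phi$, one obtains
\[
\dot\phi = i(\dot\alpha-E)\sigma\Phi + i\dot\alpha\,\sigma\phi - i\bigl(F'(\Phi)-E\sigma\bigr)\phi - iN(\phi),
\]
with $\|N(\phi)\|_{\ell^2}\lesssim\|\phi\|_{\ell^2}^2$. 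Pairing with $\sigma\Phi$ and using $\langle\sigma\Phi,\dot\phi\rangle_{\ell^2}=0$ yields
\[
(\dot\alpha-E)\|\sigma\Phi\|_{\ell^2}^2 \;=\; -\dot\alpha\,\langle\sigma\Phi,\sigma\phi\rangle_{\ell^2} + \bigl\langle\sigma\Phi,\bigl(F'(\Phi)-E\sigma\bigr)\phi\bigr\rangle_{\ell^2} + \langle\sigma\Phi,N(\phi)\rangle_{\ell^2}.
\]

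It remains to bound each term on the right-hand side. The coefficient $\|\sigma\Phi\|_{\ell^2}^2=\|\Phi\|_{\ell^2}^2$ is bounded below uniformly in small $\epsilon$ by Proposition \ref{proposition-soliton}. The nonlinear pairing is $\mathcal{O}(\nu^2)$ by Cauchy--Schwarz. The $\dot\alpha$ term splits as $-(\dot\alpha-E)\langle\sigma\Phi,\sigma\phi\rangle - E\langle\sigma\Phi,\sigma\phi\rangle$: the first piece is $\mathcal{O}(|\dot\alpha-E|\nu)$ and can be absorbed into the left-hand side for $\nu$ small, while the second is $\mathcal{O}(\nu)$. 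The crucial term is the linear-in-$\phi$ middle pairing, which I would rewrite using the Hamiltonian structure (\ref{Hamiltonian-PT-dNLS}): the linearization operator factorizes as $F'(\Phi)-E\sigma = J\mathcal{H}''_E$ for the symplectic matrix $J$ of the cross-gradient form, and self-adjointness of $\mathcal{H}''_E$ combined with identity (\ref{generalized-kernel}), $\mathcal{H}''_E(\sigma\Phi)=0$, reduces this term to a shape that is at most $\mathcal{O}(\|\phi\|_{\ell^2})=\mathcal{O}(\nu)$. Dividing through by $\|\Phi\|_{\ell^2}^2$ gives $|\dot\alpha-E|\le C\nu$ on $[0,t_0]$.

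The main obstacle is the algebraic handling of the middle term: one must invoke both self-adjointness of $\mathcal{H}''_E$ and the kernel identity (\ref{generalized-kernel}) to extract an $\mathcal{O}(\nu)$ bound (rather than an $\mathcal{O}(1)$ bound) from what a priori looks like a pairing of order $\|\sigma\Phi\|\,\|\phi\|$ with an operator norm prefactor; this is exactly where the Hamiltonian/cross-gradient structure of the PT-symmetric dNLS is essential. The rest of the argument is a routine bookkeeping of small quantities, paralleling Lemmas 6.1 and 6.3 of \cite{GP}.
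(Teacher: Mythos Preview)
Your argument is correct and follows essentially the same route as the paper: derive the evolution equation for $\phi$ from the decomposition (\ref{eq:psi}), project onto $\sigma\Phi$, and bound the resulting expression for $\dot\alpha-E$ using $\|\phi\|_{\ell^2}\lesssim\nu$. The paper compresses what you write into the single identity
\[
\dot{\alpha} - E \;=\;
\frac{\langle \mathcal{H}''_E \mathcal{S} \sigma \Phi, \phi \rangle_{\ell^2} + \langle \sigma \Phi, N(\phi) \rangle_{\ell^2}}{
\| \Phi \|_{\ell^2}^2 + \langle \Phi, \phi \rangle_{\ell^2}},
\]
absorbing your $\dot\alpha\langle\sigma\Phi,\sigma\phi\rangle$ term into the denominator rather than splitting and re-absorbing it.

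One point worth correcting: the ``main obstacle'' you flag is not actually an obstacle. The linear pairing $\langle\sigma\Phi,\mathcal{S}\mathcal{H}''_E\phi\rangle=\langle\mathcal{H}''_E\mathcal{S}\sigma\Phi,\phi\rangle$ is already $\mathcal{O}(\|\phi\|_{\ell^2})=\mathcal{O}(\nu)$ by Cauchy--Schwarz, since $\mathcal{H}''_E\mathcal{S}\sigma\Phi$ is a fixed $\ell^2$ vector with norm bounded independently of $\nu$. No appeal to the kernel identity (\ref{generalized-kernel}) is needed here, and the paper does not use it for this step. The Hamiltonian structure is essential elsewhere in the paper (coercivity of $\Lambda_E$), but the modulation estimate itself only requires that the linearized operator be bounded on $\ell^2(\mathbb{Z})$.
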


\begin{proof}
Let $\psi(t) \in C^1(\mathbb{R},\ltzz)$ be a solution to the $\PT$-symmetric dNLS equation (\ref{PT-dNLS}).
Substituting the decomposition~(\ref{eq:psi}) into the $\PT$-symmetric dNLS equation (\ref{PT-dNLS}),
we obtain the evolution equation in the form
\begin{equation}
i \dot{\phi} =  \mathcal{S} \mathcal{H}''_E \phi + (E-\dot{\alpha}) \sigma (\Phi + \phi) + N(\phi),
\label{eq:evolution}
\end{equation}
where the bounded invertible operator $\mathcal{S} : \ell^2(\mathbb{Z}) \to \ell^2(\mathbb{Z})$
represents the symplectic structure (\ref{Hamiltonian-PT-dNLS}) of the $\PT$-symmetric
dNLS equation (\ref{PT-dNLS}), $N(\phi)$ contains quadratic and cubic terms in $\phi$,
and the gauge invariance of the $\PT$-symmetric dNLS equation (\ref{PT-dNLS}) has been used.
From the condition (\ref{cond:ortho}), projecting the evolution equation (\ref{eq:evolution}) to $\sigma \Phi$ yields
\begin{equation}
\dot{\alpha} - E =
\frac{\langle \mathcal{H}''_E \mathcal{S} \sigma \Phi, \phi \rangle_{\ell^2} + \langle \sigma \Phi, N(\phi) \rangle_{\ell^2}}{
\| \Phi \|_{\ell^2}^2 + \langle \Phi, \phi \rangle_{\ell^2}}. \label{eq:dalpha}
\end{equation}
By Proposition \ref{lemma-orbit}, if $d(t) \leq \nu$ is sufficiently small for every $t \in [0,t_0]$, then $\| \phi(t) \|_{\ell^2} \leq C_0 d(t)$
for a positive constant $C_0$.
Then, the denominator in (\ref{eq:dalpha}) is bounded away from zero, whereas
the numerator is bounded by $C d(t)$, which yields the bound $|\dot{\alpha}-E| \leq C \nu$, for every $t \in [0,t_0]$.
\end{proof}

\appendix
\section{Useful results}

$\phantom{text}$ {\bf Implicit Function Theorem.} (Theorem 4.E in \cite{Zeidler}) {\em
Let $X,Y$ and $Z$ be Banach spaces and let $F(x,y)\colon X\times Y \to Z$
be a $C^1$ map on an open neighborhood of the point $(x_0,y_0)\in X\times Y$.
Assume that
$$
F(x_0,y_0) = 0
$$
and that
$$
D_x F(x_0,y_0) \colon X\to Z \text{ is one-to-one and onto. }
$$
There are $r>0$ and $\sigma>0$ such that for each $y$ with
$\| y - y_0 \|_Y \le \sigma$ there exists a unique solution
$x\in X$ of the nonlinear equation $F(x,y) = 0$ with
$\|x - x_0\|_X \le r$. Moreover, the map $Y\owns y\mapsto x(y)\in X$
is $C^1$ near $y = y_0$.
}

\vspace{0.2cm}

{\bf Perturbation Theory for Linear Operators.} (Theorem VII.1.7 in \cite{Kato}) {\em
Let $T(\epsilon)$ be a family of operators from Banach space $X$ to itself,
which depends analytically on the small parameter $\epsilon$. If the spectrum of $T(0)$
is separated into two parts, the subspaces of $X$ corresponding to the separated parts also depend on $\epsilon$
analytically. In particular, the spectrum of $T(\epsilon)$ is separated into two parts for any $\epsilon \neq 0$
sufficiently small. }

\vspace{0.2cm}

{\bf Lyapunov's Stability Theorem.} \cite{Lyapunov} {\em
Consider the following evolution problem on a Hilbert space $X$,
\begin{equation}
\frac{d\vec{x}}{dt} = \vec{f}(\vec{x}), \quad x\in X, \label{eq:lyapunov}
\end{equation}
where $\vec{f} : X \to X$ satisfies $\vec{f}(\vec{0}) = \vec{0}$.
Let $V\colon X\to\mathbb{R}$ satisfy the following properties:
\begin{enumerate}
\item $V\in C^2(X)$ with $V(\vec{0}) = 0$;
\item There exists $C > 0$ such that $V(\vec{x}) \ge C\|\vec{x}\|^2_X$ for every $\vec{x}\in X$;
\item $\frac{d}{dt} V(\vec{x}) \le 0$ for every solution of (\ref{eq:lyapunov}).
\end{enumerate}
Then the zero equilibrium of the evolution system~\eqref{eq:lyapunov} is nonlinearly stable
in the sense: for every $\nu > 0$ there is $\delta > 0$ such that if $\vec{x}_0 \in X$
satisfies $\| \vec{x}_0 \|_X \leq \delta$, then the unique solution $\vec{x}(t)$ of
the evolution system (\ref{eq:lyapunov}) such that $\vec{x}(0) = \vec{x}_0$
satisfies $\| \vec{x}(t)\|_X \leq \epsilon$ for every $t \in \mathbb{R}^+$.}

\vspace{0.25cm}

\noindent{\bf Acknowledgements.}
The work of A.C. is supported by the graduate scholarship at McMaster University.
The work of D.P. is supported by the Ministry of Education
and Science of Russian Federation (the base part of the State task No. 2014/133, project No. 2839).

\end{document}